\documentclass{article}
\usepackage{wright}
\pdfoutput=1

\addbibresource{wright.bib}
\addbibresource{tang.bib}
\AtEveryBibitem{ 
    \clearfield{day}
    \clearfield{month}
    \clearfield{series}
    \clearfield{venue}
    \clearname{editor}
    \clearlist{publisher}
    \clearlist{location} 
    \clearfield{venue}
    \clearfield{issn}
    \clearfield{isbn}
    \clearfield{urldate}
    \clearfield{eventdate}
    \clearfield{pages}
    \clearfield{number}
    \clearfield{volume}
}

\title{\texorpdfstring{\vspace{-1.0cm}}{}Amplitude amplification and estimation require inverses}
\author{
    Ewin Tang\thanks{UC Berkeley. \texttt{\{ewin,jswright\}@berkeley.edu}} 
    \and John Wright\footnotemark[1]
}

\date{}

\begin{document}

\maketitle

\begin{abstract}
We prove that the generic quantum speedups for brute-force search and counting only hold when the process we apply them to can be efficiently inverted.
The algorithms speeding up these problems, amplitude amplification and amplitude estimation, assume the ability to apply a state preparation unitary $U$ and its inverse $U^\dagger$; we give problem instances based on trace estimation where no algorithm which uses only $U$ beats the naive, quadratically slower approach.
Our proof of this is simple and goes through the compressed oracle method introduced by Zhandry~\cite{zhandry19}.
Since these two subroutines are responsible for the ubiquity of the quadratic ``Grover'' speedup in quantum algorithms, our result explains why such speedups are far harder to come by in the settings of quantum learning, metrology, and sensing.
In these settings, $U$ models the evolution of an experimental system, so implementing $U^\dagger$ can be much harder---tantamount to reversing time within the system.
Our result suggests a dichotomy: without inverse access, quantum speedups are scarce; with it, quantum speedups abound.
\end{abstract}
\hypersetup{linktocpage}
\tableofcontents

\section{Introduction}

Perhaps the most ubiquitous subroutines in quantum computing are amplitude amplification and amplitude estimation, introduced by Brassard, H{\o}yer, Mosca, and Tapp~\cite{BHMT02}.
These algorithms generalize the quadratic quantum speedup of Grover's algorithm~\cite{grover96} to such a wide extent that, these days, it's unusual when we \emph{cannot} use them to speed up a classical computation.

Amplitude amplification and estimation perform two common tactics in algorithm design: brute-force search and counting.
Suppose we have a quantum operation which prepares a state that places some mass on a desired answer.
Can we produce the answer?
And can we estimate the amount of mass on that answer?

\begin{problem}[Amplitude amplification and estimation] \label{def:aa-ae}
    Let $X \in \C^{d \times d}$ be a unitary matrix such that
    \begin{align*}
        X\ket{0} = a\ket{\phi_\good}\ket{0} + \sqrt{1 - a^2}\ket{\phi_\bad} \ket{1},
    \end{align*}
    where $\ket{\phi_\good}$ and $\ket{\phi_\bad}$ are quantum states (in particular, $\braket{\phi_\good}{\phi_\good} = \braket{\phi_\bad}{\phi_\bad} = 1$) and $0 \leq a \leq 1$.
    \begin{enumerate}
        \item The goal of \emph{amplitude amplification} is to output a state $0.01$-close to $\ket{\phi_\good}$ in trace distance.
        \item The goal of \emph{amplitude estimation} is to output an $\wh{a}$ such that $a - \eps < \wh{a} < a + \eps$ with probability $\geq 0.99$.
    \end{enumerate}
\end{problem}

The naive way to solve these problems is to repeatedly prepare a copy of $X\ket{0}$ and measure its second register in the computational basis.
The measurement outcome $\ket{0}$ occurs with probability $a^2$, in which case the first register has the desired state $\ket{\phi_\good}$.
So, to solve amplitude amplification, we can wait until we see the outcome $\ket{0}$ and output the corresponding state.
To solve amplitude estimation, we use the measurement outcomes to estimate the $a^2$ probability of seeing $\ket{0}$.
These protocols use $\bigTheta{1/a^2}$ and $\bigTheta{1/\eps^2}$ applications of $X$, respectively.
As mentioned, though, we can do better.

\begin{theorem}[{Algorithm for amplitude amplification, \cite{BHMT02}}] \label{thm:upper-aa}
    Amplitude amplification, as defined in \cref{def:aa-ae}, can be solved with $\bigO{1/a}$ applications of $X$ and $X^\dagger$.
    This is tight.
\end{theorem}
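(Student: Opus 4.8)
The plan is to establish the two halves of the statement separately: the $\bigO{1/a}$ algorithm, and the matching $\Omega(1/a)$ lower bound witnessing tightness. For the algorithm I would run the standard Grover-style amplification. Write $\ket\psi := X\ket0$, set $\ket{\psi_\good} := \ket{\phi_\good}\ket0$ and $\ket{\psi_\bad} := \ket{\phi_\bad}\ket1$, and let $\theta := \arcsin a \in [0,\pi/2]$, so that $\ket\psi = \sin\theta\,\ket{\psi_\good} + \cos\theta\,\ket{\psi_\bad}$. Form the reflections $S_0 := I - 2\ket0\bra0$ and $S_\chi := I - 2(I\otimes\ket0\bra0)$ --- the latter is the reflection about the ``flag qubit $=\ket0$'' subspace, which contains $\ket{\psi_\good}$ and is orthogonal to $\ket{\psi_\bad}$ --- and define the Grover iterate $Q := -\,X S_0 X^\dagger S_\chi$. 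Since $X$ is unitary, $X S_0 X^\dagger = I - 2\ket\psi\bra\psi$, so $Q$ is a product of two reflections; a direct computation (carried out in \cite[Lemma~1]{BHMT02}) shows that $Q$ preserves the two-dimensional real subspace $\mathrm{span}\{\ket{\psi_\good},\ket{\psi_\bad}\}$ and acts there as a rotation by $2\theta$, so that $Q^m\ket\psi = \sin((2m+1)\theta)\,\ket{\psi_\good} + \cos((2m+1)\theta)\,\ket{\psi_\bad}$. Each iterate $Q$ uses one application of $X$ and one of $X^\dagger$.

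If $a$ --- equivalently $\theta$ --- were known, we would take $m := \lfloor \pi/(4\theta)\rfloor$; since $\arcsin a \ge a$, this is $\bigO{1/a}$, and it forces $|(2m+1)\theta - \pi/2| \le \theta$, so the amplitude on $\ket{\psi_\good}$ is at least $\cos\theta$. Measuring the flag qubit then returns $\ket0$ with probability at least $\cos^2\theta = 1 - a^2$, and in that case the first register is \emph{exactly} $\ket{\phi_\good}$; hence the output mixed state lies within trace distance $a^2$ of $\ket{\phi_\good}$, which is below $0.01$ once $a \le 0.1$, while for $a$ bounded below by a constant even the naive $\bigTheta{1/a^2} = \bigTheta{1}$ protocol suffices. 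The genuinely delicate point is that $a$ is \emph{not} known in general, so we cannot read off the right $m$ and an overshoot rotates the state past $\ket{\psi_\good}$; the standard remedy is exponential search --- run the scheme with $m$ drawn uniformly from $\{0,\dots,\ell-1\}$ for a geometrically growing sequence of cutoffs $\ell = 1, \lceil c\rceil, \lceil c^2\rceil, \dots$, measuring the flag qubit after each round and halting (outputting the first register) the first time it reads $\ket0$. An averaging bound shows that once $\ell$ is a large enough constant multiple of $1/a$ a round halts with probability at least $\tfrac14$, so the whole procedure terminates after $\bigO{1/a}$ applications of $X$ and $X^\dagger$ in expectation, with no knowledge of $a$ required --- matching the bound stated in \cite{BHMT02}. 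Carrying out this last piece of bookkeeping cleanly is where I expect to spend the most effort.

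For optimality, I would reduce unstructured search to amplitude amplification. Fix $N$, and for a hidden $x^\star \in [N]$ let $X_{x^\star}$ be a unitary with $X_{x^\star}\ket0 = \tfrac1{\sqrt N}\sum_{x \in [N]}\ket{x}\ket{[x \neq x^\star]}$; both $X_{x^\star}$ and $X_{x^\star}^\dagger$ can be implemented using a single call to a standard oracle for $x^\star$ (prepare the uniform superposition over $x$, then flip the flag qubit on the input $x = x^\star$). This $X_{x^\star}$ has exactly the form of \cref{def:aa-ae} with $a = 1/\sqrt N$ and $\ket{\phi_\good} = \ket{x^\star}$. Thus a fixed, oracle-oblivious amplitude amplification routine making $T$ black-box applications of $X_{x^\star}$ and $X_{x^\star}^\dagger$ produces a state within trace distance $0.01$ of $\ket{x^\star}$; measuring that state in the computational basis returns $x^\star$ with probability at least $0.99$, giving a bounded-error algorithm for unstructured search that uses $\bigO{T}$ oracle queries. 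The $\Omega(\sqrt N)$ query lower bound for unstructured search (Bennett--Bernstein--Brassard--Vazirani, i.e.\ the optimality of Grover's algorithm) then forces $T = \Omega(\sqrt N) = \Omega(1/a)$. Since amplitude amplification must work for \emph{every} $X$ of the promised form, this establishes the $\Omega(1/a)$ lower bound in general. The one thing to verify is that the reduction really produces a legitimate query algorithm, which it does: all the $X_{x^\star}$ satisfy the promise with the same value of $a$, and black-box access to $X_{x^\star}$ and $X_{x^\star}^\dagger$ is faithfully simulated by the search oracle.
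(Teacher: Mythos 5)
This theorem is quoted from \cite{BHMT02} and the paper gives no proof of its own, so there is nothing internal to compare against; your reconstruction is the standard one from that reference (Grover iterate as a product of two reflections, exponential search to handle unknown $a$, and tightness via reduction from unstructured search plus the BBBV $\Omega(\sqrt{N})$ bound) and is correct. The only point worth flagging is that, as in \cite{BHMT02}, the $\bigO{1/a}$ cost for unknown $a$ is an expected rather than worst-case query count, which is the sense in which the theorem should be read.
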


\begin{theorem}[{Algorithm for amplitude estimation, \cite{BHMT02,ar20}}] \label{thm:upper-ae}
    Amplitude estimation as defined in \cref{def:aa-ae} can be solved with $\bigO{1/\eps}$ applications of $X$ and $X^\dagger$.
    This is tight.\footnote{
        The algorithm for amplitude estimation given in \cite{BHMT02} requires controlled access to $X$ to perform phase estimation.
        The exposition here follows the work of Aaronson and Rall, which does not require this controlled access~\cite{ar20} (see also \cite{ggzw21,fhiz23}).
        That work technically assumes that $\eps < a$, but the $\eps \geq a$ can easily be handled by the same set of techniques.
    }
\end{theorem}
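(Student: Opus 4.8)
The plan is the textbook Grover-operator-plus-phase-estimation argument, with the controlled access stripped out following Aaronson and Rall~\cite{ar20}, plus a reduction from quantum counting for the matching lower bound. First I would build the Grover iterate $Q = -\,X\,(I - 2\ket{0}\bra{0})\,X^\dagger\,(I - 2(I\otimes\ket{0}\bra{0}))$ out of $X$, $X^\dagger$, and two reflections that involve no application of $X$, so that each use of $Q$ costs exactly one $X$ and one $X^\dagger$. Writing $a = \sin\theta$ with $\theta\in[0,\pi/2]$, $\ket{\psi_\good} := \ket{\phi_\good}\ket{0}$, and $\ket{\psi_\bad} := \ket{\phi_\bad}\ket{1}$, a short calculation shows that $Q$ preserves the two-dimensional subspace spanned by $\ket{\psi_\good}$ and $\ket{\psi_\bad}$, acts there as a rotation by angle $2\theta$ (so its eigenvalues on that subspace are $e^{\pm 2i\theta}$), and that the input $X\ket{0} = \sin\theta\,\ket{\psi_\good} + \cos\theta\,\ket{\psi_\bad}$ lives in this subspace. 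Hence it suffices to estimate the phase $\theta$ to additive error $\eps$: since $\sin$ is $1$-Lipschitz, the output $\wh a := \sin\wh\theta$ then satisfies $|\wh a - a| \le |\wh\theta - \theta| \le \eps$.

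To recover $\theta$, the version that allows controlled-$Q$ just runs phase estimation on $Q$ with input $X\ket{0}$, returning $\theta$ to additive error $\eps$ with $O(1/\eps)$ applications of $Q$, hence $O(1/\eps)$ applications of $X$ and $X^\dagger$; a constant number of repetitions and a median push the success probability above $0.99$. To dispense with controlled access I would instead use the iterative scheme of Aaronson and Rall: maintain an interval $[\ell, r]\ni\theta$, and in each round apply $Q^k$ to $X\ket{0}$ for $k$ proportional to the reciprocal of the current interval width, measure whether the second register is good (which happens with probability $\sin^2((2k+1)\theta)$), and use a handful of such measurements to shrink the interval by a constant factor. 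This uses only $X$ and $X^\dagger$; after $O(\log(1/\eps))$ rounds the interval has width $\le\eps$, and the total cost is a geometric series dominated by its last term, $O(1/\eps)$. The boundary cases $a\in\{0,1\}$ and the regime $\eps\ge a$ are handled separately by a coarse test, as noted in the footnote.

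For tightness I would exhibit instances forcing $\Omega(1/\eps)$ uses of $X$ and $X^\dagger$. Take $X$ to be the usual ``prepare a uniform superposition, then apply a marked-element oracle'' unitary on a domain of size $N$ with $t$ marked elements, so that $a=\sqrt{t/N}$; with $t\approx N/2$ we get $a\approx 1/\sqrt 2$, and an $\eps$-accurate estimate of $a$ determines $t$ to within $\Theta(N\eps)$. Known quantum approximate-counting lower bounds (Nayak--Wu) then force $\Omega(\sqrt{t(N-t)}/(N\eps)) = \Omega(1/\eps)$ oracle queries, each supplied by one $X$ or $X^\dagger$ (the oracle being self-inverse); this reproduces the optimality already argued in~\cite{BHMT02}. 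I expect the main obstacle to be the controlled-access-free estimation: making the adaptive interval-shrinking rigorous and keeping the $O(\log(1/\eps))$ per-round failure probabilities union-bounded below $0.01$ without inflating the query count is the delicate part, whereas the Grover-iterate algebra, the phase-estimation bound, and the counting reduction are routine.
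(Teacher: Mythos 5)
Your sketch is correct and matches the argument the paper relies on: the paper does not prove \cref{thm:upper-ae} itself but cites \cite{BHMT02,ar20}, and your reconstruction (the Grover iterate built from one $X$ and one $X^\dagger$ per step, the Aaronson--Rall interval-refinement scheme to avoid controlled access, and the Nayak--Wu approximate-counting reduction for tightness) is exactly the content of those references. No gaps beyond the per-round error bookkeeping you already flag, which \cite{ar20} handles.
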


These routines, along with their more sophisticated descendants, are pillars in the field of quantum algorithms.
Amplitude amplification appears when we want to amplify the success probability of a protocol.
It is a central source of speedup in combinatorial settings, including problems involving constraint satisfaction~\cite{ambainis04}, strings~\cite{rv03,aj23}, and graphs~\cite{dhhm06,abikpv19}.
Enhanced ``oblivious''~\cite{bccks17}, ``variable-time''~\cite{ambainis12}, and QSVT-based~\cite{gslw18} forms of amplification are commonly applied to linear algebraic tasks like ground state preparation~\cite{pw09,lt20}, Hamiltonian simulation, and solving linear systems.
Amplitude estimation appears when we want to estimate something, from means of distributions~\cite{ko23} to physical quantities~\cite{kos07,rall20} to entire quantum states~\cite{vAcgn22}.
Taken altogether, these two subroutines explain why quadratic quantum speedups are so plentiful that they seem almost mundane.

Amplitude amplification and estimation have a curious feature: they require the ability to apply both the quantum operation $X$ and its inverse $X^\dagger$.
We call these \emph{forward} and \emph{inverse} queries, respectively.
But the naive algorithms with complexity $\bigTheta{1/a^2}$ and $\bigTheta{1/\eps^2}$ don't need access to the inverse.
So why do the improved algorithms need it?
And is it necessary for this quadratic speedup?

We can identify from inspecting these algorithms that they do not make sense without inverse queries.
They follow the same principle as Grover's algorithm.
Using applications of $X$ and $X^\dagger$, we can construct quantum circuits for reflections about the good state, $\ket{\phi_\good}\ket{0}$, and about the initial state, $X\ket{0}$.
By alternating these reflections, we can rotate from the initial state towards the good state at a rate which is linear in $a$.
These reflections facilitate an elegant analysis: the quantum state lies in a two-dimensional subspace of the full Hilbert space throughout the algorithm, so the algorithm admits a purely geometric interpretation.
It uses inverse queries to implement the reflection about the initial state, $X (I - 2\proj{0}) X^\dagger$.
Without access to $X^\dagger$, these reflections may not be efficiently implementable, breaking the algorithm.
It moreover seems inevitable that, instead of being confined in a two-dimensional subspace, an algorithm which uses only forward queries must careen off into exponentially large Hilbert space.\footnote{
    The usage of $X^\dagger$ here is related to the idea of \emph{uncomputation}: applying $X$ introduces garbage which interferes with the algorithm's intended behavior, and $X^\dagger$ is precisely the operation needed to remove the garbage.
}

However, it's hard to believe that inverse queries are genuinely necessary for the quadratic speedup.
After all, for tasks in quantum learning, we can frequently achieve $\bigO{1/\eps}$-type quadratic speedups without access to the inverse.
Hamiltonian learning from real-time evolutions~\cite{htfs23}; learning a unitary's spectrum~\cite[Appendix B]{hkot23}; and even full tomography of the unitary~\cite{hkot23} all attain a $\bigO{1/\eps}$ scaling with only forward queries.\footnote{
    For a simple example of this speedup, consider the following task.
    Let $X$ be either the unitary $U$ or $V$, where the diamond distance between $U$ and $V$ as quantum channels is $\eps$.
    Then we can distinguish which with $\bigTheta{1/\eps}$ forward queries by applying $(XU^\dagger)^{0.1/\eps}$ to an appropriately chosen input state: if $X = U$, then this will not change the input, and if $X = V$, the final state will be distinguishable from the input state with constant probability.
}
Further, if we are okay paying a dependence on the dimension, then there is no difference between the inverse-free and the inverse-ful models, because we can simulate an inverse query with $\bigTheta{d^2}$ forward queries~\cite{schl16,navascues18,sedlak2019optimal,qdssm19b,oym24,cmlzw24}.
Applying these techniques to the problems under consideration, we get that amplification and estimation can be done with $\bigO{d^2/a}$ and $\bigO{d^2/\eps}$ queries to $X$.
The same complexity can also be attained by learning the full unitary channel outright~\cite{hkot23}.
Somehow, though, these techniques do not beat the naive algorithms in the regime where $d$ is large, which is the typical regime.
If inverse access indeed provides an asymptotic improvement for these tasks, the manner in which it does so is quite subtle.

This mystery has practical relevance as well.
In many circumstances, access to $X$ is easy while access to $X^\dagger$ is hard.
Our standard rationale for being given access to both $X$ and $X^\dagger$ is as follows: we imagine that $X$ is given as a quantum circuit on a scalable quantum computer, in which case $X^\dagger$ can be performed by simply inverting the quantum circuit, gate by gate.
So, if applying $X$ is efficient, so is applying $X^\dagger$.
But amplification and estimation are also central tasks in the fields of quantum(-enhanced) sensing~\cite{glm04,drc17}, metrology~\cite{glm11,ta14}, and learning~\cite{slp11,aa23}, where $X$ arises from a natural process, like a device or a material in a lab.
In these domains, access to $X^\dagger$ can be much harder, morally as hard as reversing the direction of time within the system.
For example, the most prominent application of quantum-enhanced sensing is to the detection of gravitational waves~\cite{caves81,ligo13}: here, the unitary process is generated by, say, the merging of two black holes, so generating the inverse process seems impractical.

The primary goal in these domains, metrology especially, is to beat the naive $\bigO{1/\eps^2}$ algorithm for estimation.
With inverse access, essentially all such tasks achieve a Grover speedup.
But in these fields which cannot assume it, any kind of speedup seems hard to come by.
Are there better techniques in the inverse-less setting that we are missing?
Or do quadratic quantum speedups truly require inverse queries?

\subsection{Results}

We show that these quadratic speedups do require access to inverse queries.

\begin{theorem}[Amplitude amplification requires inverse access] \label{thm:lower-aa-main}
    Given only the ability to apply the input unitary $X \in \C^{d \times d}$, and not its inverse, any algorithm for amplitude amplification must use $\bigOmega{\min(d, 1/a^2)}$ applications of $X$.
\end{theorem}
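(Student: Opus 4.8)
The plan is to exhibit a distribution over instances $X$ for which a forward-query algorithm is essentially no more powerful than the naive ``prepare $X\ket0$, measure the flag register, and hope for the good outcome'' protocol, whose per-copy success probability is only $a^2$. I would take the good state to be a \emph{hidden} computational basis vector: draw $X \in \C^{d\times d}$ so that
\begin{align*}
    X\ket0 \;=\; a\,\ket{z}\ket0 \;+\; \sqrt{1-a^2}\,\ket{\phi_\bad}\ket1,
\end{align*}
with the index $z$ and the bad state $\ket{\phi_\bad}$ determined by a random oracle, and with $X$ acting as a structureless (oracle-randomized) unitary on every input orthogonal to $\ket0$; these are the search counterpart of the trace-estimation instances mentioned in the introduction, in that the ``active'' part of $X\ket0$ occupies only an $a^2$-fraction of the space and the task is to locate it. A state $0.01$-close in trace distance to $\ket{\phi_\good}=\ket z$, measured in the computational basis, returns $z$ with probability $\ge 0.9$, so any $T$-query amplitude amplification algorithm yields a $T$-forward-query algorithm that outputs $z$ --- and, by \cref{thm:upper-aa}, this does succeed with $\bigO{1/a}$ queries once inverse access is allowed. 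It therefore suffices to prove that recovering $z$ requires $\bigOmega{\min(d,1/a^2)}$ forward queries to $X$.

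For the lower bound I would run the algorithm against the compressed-oracle simulation of the randomness in $X$~\cite{zhandry19}, so that after $t$ queries the oracle's database holds at most $t$ records and, conditioned on no branch yet ``containing'' $z$, the index $z$ remains a fresh uniformly random value independent of the algorithm's state. The crux is that, without $X^\dagger$, a query can advance the algorithm toward $z$ along only two channels. First, the query on input $\ket0$ lands in the active branch of $X\ket0$ with probability $\bigO{a^2}$; here the point is that the off-branch responses are random-oracle junk which cannot interfere constructively with the active branch, so these contributions accumulate at the incoherent rate $\bigO{a^2}$ in probability per query, rather than the coherent rate $\bigO{a}$ in amplitude that a Grover-style argument would need --- precisely because the amplifying operation $X(I-2\proj{0})X^\dagger$ is unavailable. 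Second, since $X$ is unitary the $\ket z\ket0$-component is spread across all $d$ columns of $X$ with total weight $1$, so probing columns other than $X\ket0$ can also reveal $z$, but only after $\bigOmega d$ of them. Tracking these two resources along the run and comparing with a resampled-oracle run bounds the probability that the algorithm outputs $z$ by $\bigO{T\max(a^2,1/d)}$, so producing $z$ with the required probability forces $T = \bigOmega{\min(d,1/a^2)}$; this is the exact point at which the absence of $X^\dagger$ costs a quadratic factor in the hard regime $a \ge 1/\sqrt d$.

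The step I expect to be the real obstacle is making rigorous the assertion that forward queries buy nothing beyond these two channels: the algorithm may feed $X$ arbitrary inputs and interleave arbitrary auxiliary unitaries, so one must show that every resulting branch is either oracle-independent junk (because $X$ is structureless off $\ket0$), already bookkept by the compressed database, or unable to interfere with the active branch so as to raise the $z$-revealing weight faster than $\bigO{a^2}$ per query. Concretely this means designing the instance family so that its randomness is lazily samplable in the compressed-oracle sense and then proving the single sharp per-query norm bound; the remaining ingredients --- the reduction from amplitude amplification, the summation over queries, the resampling comparison, and the $d$-versus-$1/a^2$ case analysis --- are routine.
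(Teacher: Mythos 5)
There is a genuine gap, and it sits exactly where you flag it: the assertion that the algorithm's progress toward $z$ accumulates \emph{incoherently}, at rate $\bigO{a^2}$ in probability per query rather than $\bigO{a}$ in amplitude. The compressed-oracle framework, applied off the shelf, gives a per-query bound on how much \emph{amplitude} can move onto database states that record $z$; summing amplitudes yields only $T = \bigOmega{1/a}$, which is consistent with the inverse-ful upper bound and proves nothing about the forward-only model. The entire content of the theorem is the mechanism that converts this into a probability-type bound, and "the off-branch responses are random-oracle junk which cannot interfere constructively" is a restatement of the desired conclusion, not an argument. The paper supplies this mechanism explicitly: it chooses diagonal unitaries with i.i.d.\ entries (\cref{def:ae-ensembles}) so that the purification factorizes over diagonal entries and depends only on the \emph{histogram} of Feynman paths; it then shows via a biased Fourier transform (\cref{lem:biased-ft}) that the $\eps$-sized deviation between the two ensembles lands in a subspace orthogonal to the purification's support (because forward queries can only \emph{increase} the histogram indices, which always sum to exactly $n$), and \cref{lem:eps-squared} converts that orthogonality into an $\bigO{n\eps^2}$ trace-distance bound. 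Without an analogue of that orthogonality structure for your instance, the quadratic improvement over the adversary/hybrid-type bound does not follow.

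Two further concrete problems with your instance. First, a "structureless" random unitary constrained to have a prescribed first column is not lazily samplable by Zhandry's original compressed oracle; you would need the path-recording machinery of Ma and Huang~\cite{mh25}, and even then the unitarity constraints couple the columns, so your "second channel" claim that probing other columns reveals $z$ only after $\bigOmega{d}$ queries is unsupported --- indeed Holman, Ramachandran, and Yirka~\cite{hry25} show that for permutation oracles, exactly this kind of column-probing succeeds with $\bigO{\sqrt{d}}$ forward queries, quadratically beating the "obvious" $\bigOmega{d}$ bound. The paper sidesteps this entirely by making the unitary diagonal, so that distinct diagonal entries are independent and no cross-column information exists. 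Second, even granting a lower bound for your search instance, the paper's route to \cref{thm:lower-aa-main} is different and deliberately so: it reduces a \emph{distinguishing} task ($V$ versus $DV$, \cref{prob:aa-distinguish}) to amplitude amplification and lifts the estimation lower bound (\cref{thm:lower-ae}) via \cref{thm:lower-aa}, and it separately handles the fact that amplitude-amplification algorithms may make a number of queries depending on the unknown $a$, which forces a compilation argument before any fixed-circuit lower bound applies. Your proposal would need to address both of these as well.
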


\begin{theorem}[Amplitude estimation requires inverse access] \label{thm:lower-ae-main}
    Given only the ability to apply the input unitary $X \in \C^{d \times d}$, and not its inverse, any algorithm for amplitude estimation must use $\bigOmega{\min(d, 1/\eps^2)}$ applications of $X$.
\end{theorem}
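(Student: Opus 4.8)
The plan is to exhibit, for each dimension $d$, a random unitary $X\in\C^{d\times d}$ of the form demanded by \cref{def:aa-ae} whose hidden amplitude $a$ encodes a ``trace‑like'' statistic of a random permutation that no forward‑query algorithm can estimate efficiently, and to prove the forward‑query lower bound using the compressed oracle method. Write $d=2n$, split $[d]$ as $[n]\times\{0,1\}$, let $E:=[n]\times\{0\}$ (so $|E|=n$), and let $W$ be a fixed, publicly known unitary with $W\ket 0=\frac{1}{\sqrt n}\sum_{x\in E}\ket x$. For a permutation $\pi$ of $[d]$ with matrix $U_\pi:\ket x\mapsto\ket{\pi(x)}$, set $X:=U_\pi W$. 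Then $X\ket 0=\frac{1}{\sqrt n}\sum_{x\in E}\ket{\pi(x)}$, which, grouping terms by the last bit of $\pi(x)$, has exactly the form $a\ket{\phi_\good}\ket 0+\sqrt{1-a^2}\ket{\phi_\bad}\ket 1$ with $a^2=|\pi(E)\cap E|/n=\mathrm{tr}(P_E U_\pi^\dagger P_E U_\pi)/n$ (the normalized amount by which $\pi$ preserves the half‑set $E$, where $P_E$ projects onto $E$), and with $a$ bounded away from $0$ and $1$ so that $\eps<a$ as in the natural regime. Crucially, a forward query to $X$ is a forward query to $U_\pi$ followed by the known unitary $W$, whereas obtaining $U_\pi^\dagger$ would require $X^\dagger$; hence an $\eps$‑accurate, forward‑only amplitude‑estimation algorithm for $X$ using $q$ queries yields a forward‑only algorithm estimating $|\pi(E)\cap E|/n$ to additive error $\bigO{\eps}$ with $q$ queries to $U_\pi$.

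For the hard distributions I would (assuming without loss of generality that $\eps$ is below a small absolute constant and $d$ is large, as otherwise $\bigOmega{\min(d,1/\eps^2)}=\bigOmega{1}$) set $\delta:=C\max(\eps,1/\sqrt n)$ for a large absolute constant $C$, and compare: (i)~a uniformly random permutation $\pi$ of $[d]$; versus (ii)~a permutation that is uniform subject to $\pi(E_1)\subseteq E$ and $\pi(E\setminus E_1)\subseteq\overline E$, where $E_1\subseteq E$ is itself a uniformly random subset of size $\lceil(\frac12+\delta)n\rceil$. In case~(i), $|\pi(E)\cap E|$ is hypergeometric with mean $n/2$ and fluctuations $\bigO{\sqrt n}$, so $a^2\approx\frac12$; in case~(ii), $|\pi(E)\cap E|=|E_1|\approx(\frac12+\delta)n$, so $a^2\approx\frac12+\delta$. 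For $C$ large enough the resulting gap $\approx\delta$ in $a^2$ exceeds both these fluctuations and the $\bigO{\eps}$ error of the estimate, so an $\eps$‑accurate amplitude estimation for $X$ distinguishes (i)~from~(ii). (A subtlety: the bound would be false if the hidden $\pi$ were self‑inverse, since then forward access effectively supplies $U_\pi^\dagger$ for free and one could run amplitude estimation on a clean phase oracle for ``$\pi(x)\in E$''; we avoid this because both distributions place overwhelming mass on $\pi$ having a cycle of length $\bigOmega{d}$, so no small power $U_\pi^k$ is an involution.) It thus suffices to show that distinguishing (i)~from~(ii) requires $\bigOmega{1/\delta^2}$ forward queries to $U_\pi$; since $1/\delta^2=\bigTheta{\min(d,1/\eps^2)}$, this is the claimed bound.

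This distinguishing bound is the real work, and is where the compressed oracle method enters. Classically the picture is transparent: with $q$ forward queries an algorithm learns $\pi$ at only $\bigO{q}$ inputs; restricting attention to the $\bigO{q}$ queried inputs that lie in $E$, the indicator ``is the image in $E$'' is, marginally over the hidden subset $E_1$, Bernoulli$(\frac12)$ under~(i) and Bernoulli$(\frac12+\delta)$ under~(ii), and these are only $\bigO{\delta\sqrt q}$‑close in total variation, forcing $q=\bigOmega{1/\delta^2}$. To carry this over to the quantum forward‑query model I would track the joint state of the algorithm together with a compressed representation of the random permutation, show that after $q$ forward queries this state is supported on ``databases'' recording at most $q$ input--output pairs of $\pi$, and run a hybrid/progress‑measure argument in the spirit of Zhandry's compressed‑oracle proofs, bounding the per‑query change in the statistic separating (i)~from~(ii). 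The structural reason the classical bound survives---i.e.\ that no amplitude‑amplification‑style speedup applies---is that turning $U_\pi$ into a clean phase oracle for ``$\pi(x)\in E$'' requires uncomputing a register holding $\ket{\pi(x)}$, and the only operation that does this is $U_\pi^\dagger$; a forward‑only algorithm can only leave this register as garbage, which is exactly what the compressed database records.

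The step I expect to be the main obstacle is this compressed‑oracle analysis of a \emph{one‑directional, in‑place} permutation oracle. Zhandry's framework is cleanest for XOR‑type (self‑inverse) function oracles; here I would need to set up the right compressed state space for a random permutation queried in only one direction, check that a single forward query enlarges the database by at most one entry and perturbs the relevant statistic only on the anticipated $\bigO{\delta/\sqrt q}$ scale, and---most delicately---control the quantum interference between database branches, so that the clean classical bound genuinely governs the quantum distinguishing advantage. By comparison, the remaining ingredients (verifying the reduction, the two‑regime choice of $\delta$, and the concentration statements for $|\pi(E)\cap E|$) are routine.
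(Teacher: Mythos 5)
Your reduction checks out, but the heart of the argument is missing. The construction $X=U_\pi W$ with $a^2=|\pi(E)\cap E|/n$, the two permutation ensembles, the choice $\delta=\Theta(\max(\eps,1/\sqrt{n}))$, and the concentration statements are all fine, and indeed $1/\delta^2=\Theta(\min(d,1/\eps^2))$. But everything then rests on the claim that distinguishing a uniform permutation from one with $|\pi(E)\cap E|\approx(\tfrac12+\delta)n$ requires $\Omega(1/\delta^2)$ forward queries to $U_\pi$, and for this you offer only a plan while yourself flagging it as the main obstacle. This is not a routine adaptation of Zhandry's method: compressed oracles for random \emph{permutations} are a notoriously delicate and only partially resolved problem even for standard XOR-type access, and your oracle is moreover in-place and queried in one direction only. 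The entries of a permutation matrix are not independent, so the oracle's purification does not factorize, and there is no clean analogue of tracking, coordinate by coordinate, how many times each entry has been hit. You would also have to rule out algorithms exploiting composability of permutations (powers $U_\pi^k$ walk along cycles and correlate the constraint defining ensemble~(ii) in ways that independent phases cannot); the $O(\sqrt d)$ forward-only permutation-inversion algorithm of \cite{hry25} is a concrete warning that such oracles admit unexpected speedups. The step you defer is exactly the step where the proof could fail, and no argument is supplied for it.

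The paper sidesteps all of this by taking the hidden unitary to be \emph{diagonal with i.i.d.\ entries} (uniform vs.\ $\eps$-biased roots of unity), distinguished via trace estimation. Independence makes the purification register a tensor product indexed by the diagonal entries, each factor depending only on the number of Feynman paths hitting that entry; a ``biased Fourier transform'' then shows the two purifications differ only by components orthogonal to the support of the unbiased one, each of weight $O(\eps^2)$, and a short trace-distance lemma converts this orthogonal $O(n\eps^2)$ perturbation into the $\Omega(\min(d,1/\eps^2))$ bound. The orthogonality uses that forward queries only \emph{increase} the per-entry counts, which is precisely where inverse access breaks the argument. If you want to pursue your permutation instance, you would need a full forward-only compressed-permutation-oracle analysis, a substantially harder undertaking than the i.i.d.\ diagonal route.
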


Compare these lower bounds to the previously-discussed upper bounds of $\bigO{\min(d^2/a, 1/a^2)}$ and $\bigO{\min(d^2/\eps, 1/\eps^2)}$ for amplification and estimation, respectively.
When $d$ is large, as is typical, then the upper bounds match the lower bounds, showing that when we only have forward access to $X$, the naive algorithm---which uses non-adaptive measurements, no quantum space overhead, and no quantum gates beyond $X$---is query-optimal.
In many senses, without inverse access, the naive algorithm is optimal.

This confirms a conjecture of van Apeldoorn, Cornelissen, Gily\'{e}n, and Nannicini~\cite{vAcgn22}.
These lower bounds continue to hold even when given access to $\controlled X = \proj{0} \otimes I + \proj{1} \otimes X$ (\cref{rmk:ctrl}), by way of a procedure which converts controlled queries to their uncontrolled versions~\cite{tw25b}.
They also continue to hold even if we are told a constant-error estimate of $a$.
We refer the reader to \cref{rmk:what-is-it} for further discussion of how our results apply to variants of amplitude amplification and estimation.

Our lower bound explains the importance of inverse access in attaining generic quantum speedups for algorithmic tasks.
This result is particularly consequential in experimental domains, where access to the inverse of a process can be much more difficult than the process itself.
Recent work has noted that inverse access to a quantum process can make a problem tractable when it otherwise would not be~\cite{csm23,shh25}: they show this\footnote{
    These lower bounds are not unconditional: Cotler, Schuster, and Mohseni assume that the quantum algorithms have no ancilla qubits; Schuster, Haferkamp, and Huang assume the existence of a quantum-secure one-way function.
} for problems related to estimating out-of-time-ordered correlators (OTOCs), observables of circuits of the form $X A X^\dagger \ket{\psi}$.
Our work follows in a similar vein.
The central finding of the fields of quantum-enhanced sensing and metrology is that estimating certain parameters of a quantum-mechanical system can be done quadratically more efficiently, if we are allowed to use entanglement in our estimation procedures.
These procedures only need forward queries.
We give an explicit estimation task where quadratic savings is only possible when we have \emph{both} forward and inverse queries.
Prior no-go theorems only rule out such quadratic savings in the presence of noise or non-unitarity~\cite{jwdfy08,glm11}.
We show that there is a more fundamental limitation beyond noise: if our experiments must follow the arrow of time, then the techniques of quantum sensing and metrology will always remain limited in scope, despite the generic, sweeping speedup for estimation tasks that we see in the field of quantum algorithms.

Our proofs of \cref{thm:lower-aa-main,thm:lower-ae-main} are simple and clearly identify the source of the quadratic separation between forward and inverse queries.
We found this surprising, as the technical barriers for this proof seemed difficult to surmount.

\paragraph{Proving lower bounds against forward queries.}
The central barrier of this work is the lack of existing methods which are suitable for attacking this lower bound.
The authors who conjectured the lower bound remark on this barrier, stating that they ``are not aware of any lower bound techniques that differentiate between normal and inverse usage of an input oracle''~\cite{vAcgn22}.

The overwhelming majority of lower bound techniques in quantum query complexity have been developed in the setting where one is given access to an oracle for a string $x \in \mathcal{D}^n$.
When the domain $\mathcal{D}$ is Boolean, this oracle will be its own inverse,
and when the domain is larger, there are multiple ways to define this oracle,
all of which lead to unitary matrices which either are their own inverse or can easily simulate their own inverse.
As a result, these lower bound techniques cannot tell the difference between an $X$ query and an $X^{\dagger}$ query, at least in their native settings.
Recently, the polynomial method~\cite{bbcmw01} was generalized to hold for arbitrary unitary queries by She and Yuen~\cite{sy23},
but in defining this generalization they specifically only considered algorithms which are granted access to both $X$ and $X^\dagger$.
More promisingly, the adversary method~\cite{ambainis00} has also been generalized to hold for arbitrary unitary queries by Belovs~\cite{belovs15},
and his generalization allows for algorithms which are not given inverse access.
Indeed, as we will describe below, this generalized adversary method \emph{is} actually strong enough to capture the difference between a query to $X$ and a query to its inverse.
However, the adversary method is famously unwieldy to use, and we do not have many tools for analyzing this particular generalization of it.

To our knowledge, there are only a few problems where there is a known separation between the inverse-less and inverse-ful settings.
The first of these is
permutation inversion with an in-place oracle,
stated as follows: given a unitary $P \in \C^{d \times d}$ which implements a permutation, $P\ket{i} = \ket{\pi(i)}$, find $\pi^{-1}(1)$.
This can be solved with one query to $P^\dagger$, but Fefferman and Kimmel~\cite{fk18} show, using a variant of the adversary method adapted to permutation matrices, that with only access to $P$, $\bigOmega{\sqrt{d}}$ forward queries are necessary.
This was later reproved by Belovs and Yolcu~\cite{by23} using the adversary bound of Belovs~\cite{belovs15}.
Together, these results show that suitable generalizations of the adversary method \emph{can} distinguish between the inverse-free and inverse-ful access models.
Fefferman and Kimmel then conjectured that $\bigOmega{d}$ forward queries were necessary to solve permutation inversion, meaning that quantum computers could not beat classical brute force.
This conjecture's logic follows the same logic as ours: uncomputation enables quadratic speedups, so without the uncomputation there is no quadratic speedup.
However, Holman, Ramachandran, and Yirka~\cite{hry25} recently disproved Fefferman and Kimmel's conjecture by giving a $\bigO{\sqrt{d}}$-query algorithm for permutation inversion.
This work provides an interesting counter-point to our work: although permutation inversion is another problem where inverse access can be useful,
their work shows that a ``Grover-like'' quadratic speedup is still possible even without inverse access.

For amplitude amplification and estimation, we separate the inverse-less and inverse-ful models via the compressed oracle method of Zhandry~\cite{zhandry19}.
We were inspired to try this method upon seeing its success in proving the existence of pseudorandom unitaries (PRUs).
An efficiently computable random unitary $X$ is a PRU if it is indistinguishable from a Haar random unitary to any efficient adversary given query access to $X$,
and it is a strong PRU if it remains secure when the adversary is also given query access to $X^{\dagger}$.
Ma and Huang~\cite{mh25} gave the first construction of PRUs and strong PRUs with a proof based on the compressed oracle method.
A striking feature of this method is its simplicity and clarity, particularly when compared to prior work whose primary tool was representation theory~\cite{mpsy24,shh25}.
This method further appealed to us because it had already been demonstrated to be sensitive to the difference between forward and inverse queries.
Ma and Huang gave examples of PRUs which are provably not strong PRUs, so the task of breaking these PRUs therefore gives a (conditional) separation between the inverse-less and inverse-ful models.
Our work further illustrates the power of this method: not only is it able to (unconditionally) separate these two models, the resulting proof is technically clean and conceptual.

\paragraph{Proof overview.}
To describe how we use the compressed oracle method, we will first sketch the high-level structure of the proof.
We will focus on the lower bound for amplitude estimation, from which the lower bound for amplitude amplification follows as a corollary.

We prove hardness for the problem of (normalized) \emph{trace estimation}: given a unitary $U \in \C^{d \times d}$, estimate $\abs{\ntr(U)} = \frac{1}{d} \abs{\tr(U)}$ to $\eps$ error.
Trace estimation reduces to amplitude estimation, while inheriting its pertinent features.
With only access to $U$, the best-known algorithm for trace estimation queries $U$ $\bigO{\min(d^2/\eps,1/\eps^2)}$ times, where the two parts of the complexity come from unitary channel tomography~\cite{hkot23} and amplitude estimation~\cite{kl98}, respectively.
With $U$ and $U^\dagger$, we can use the subroutine for amplitude estimation to get an algorithm with $\bigO{1/\eps}$ queries to $U$ and $U^\dagger$.
We further expect this task to be hard without access to $U^\dagger$, because the trace is a global feature of the unitary, and so might be particularly hard to estimate without inverses.

Specifically, we define two ensembles of random diagonal unitaries (\cref{def:ae-ensembles}).
Ensemble 0 is unbiased, meaning that its entries are independent phases with mean zero.
Ensemble 1 is biased, so its entries are independent phases biased towards $1$.
We consider the distinguishing task (\cref{prob:ae-distinguish}): given access to $U^{(b)}$ drawn from Ensemble $b \in \braces{0,1}$, decide which $b$.
\begin{align} \label{eq:intro-ensembles}
        U^{(0)} &= \begin{bmatrix}
        U_1^{(0)} \\ & \ddots \\ & & U_d^{(0)}
    \end{bmatrix} \text{ with } \E[U_i^{(0)}] = 0 \quad \text{\textbf{or}} \quad
        U^{(1)} = \begin{bmatrix}
        U_1^{(1)} \\ & \ddots \\ & & U_d^{(1)}
    \end{bmatrix} \text{ with } \E[U_i^{(1)}] \approx \eps
\end{align}
This task can be solved with trace estimation, since with good probability, $\abs{\ntr(U^{(0)})} < 0.1\eps$, while $\abs{\ntr(U^{(1)})} \geq 0.2\eps$ (\cref{lem:traces-distinguish,prop:ae-reduction}).

We show a lower bound of $\bigOmega{1/\eps^2}$ queries to $U$ for the distinguishing task, following the compressed oracle method (\cref{thm:lower-ae}).\footnote{
    The lower bound is $\bigOmega{\min(d, 1/\eps^2)}$ because the two ensembles are only distinguishable provided $d \gg 1/\eps^2$: only then does the normalized trace concentrate.
}
We consider a quantum circuit which applies $U$ $n$ times and attempts to distinguish the ensembles.
This algorithm outputs a particular quantum state $\ket{\alg_U}$ on the input unitary $U$.
So, if Ensemble $b$ samples a unitary $U$ with probability $p_b(U)$, then the final state of the algorithm when run on this random $U$ is
\begin{align*}
    \tr_\reg{P}\parens[\Big]{\proj{\alg^{(b)}}} \text{ where } \ket{\alg^{(b)}} = \sum_U \sqrt{p^{(b)}(U)} \ket{\alg_U}\ket{U}_{\reg{P}}.
\end{align*}
The compressed oracle method gives us a way to identify the components of this mixed state: they are sums of Feynman paths through the circuit.
Further, we can couple the components between the two ensembles.
Although $\ket{\alg^{(0)}}$ and $\ket{\alg^{(1)}}$ are far from each other, we find unitaries $Q^{(0)}$ and $Q^{(1)}$ such that
\begin{align*}
    Q^{(0)}_{\reg{P}} \ket{\alg^{(0)}} \approx Q^{(1)}_{\reg{P}}\ket{\alg^{(1)}}.
\end{align*}
These unitaries are biased Fourier transforms (\cref{lem:biased-ft}).
Even though we apply different unitaries to these two states, they are on the purification register, so the final mixed states attained by tracing out this register remain the same.
This coupling reveals a special orthogonality behavior in the purifications that we exploit in our proof (\cref{lem:eps-squared}).
In the language of mixed states, it turns out that the difference between the mixed states of each ensemble can be captured in the \emph{probabilities} of the individual components in the mixture.
This is only true for algorithms with forward access to $U$; when $U^\dagger$ is given as well, the same analysis can be followed, but the mixed states will differ, not just in the probabilities of the components, but in the components themselves.
The deviation being ``classical'' under forward access is what gives the lower bound of $1/\eps^2$: applying only $U$ without $U^\dagger$ creates de-coherence within the quantum algorithm which dashes our hopes for a quadratic speedup.

\subsection{Discussion} \label{subsec:discussion}

We have shown that the quadratic speedup of amplitude amplification and estimation cannot be achieved without access to both $X$ and $X^\dagger$.
Specifically, we show a lower bound on amplitude estimation (and normalized trace estimation) of $\bigOmega{\min(d, 1/\eps^2)}$, compared to the best known upper bound of $\bigO{\min(d^2/\eps, 1/\eps^2)}$.
We conjecture that the upper bound is tight, suggesting that there are two regimes in the inverse-less setting.
\emph{Estimating all features} of a unitary channel $X$, i.e.\ learning the unitary, can be done coherently, with a $\bigO{1/\eps}$ scaling.
For \emph{estimating one desired feature} of $X$, though, nothing is better than the naive, incoherent $\bigO{1/\eps^2}$ algorithm.
Improving our lower bound may require changing the distinguishing task.

This work fits into a broader line of research giving barriers to Grover's algorithm.
For example, we know that a quadratic Grover speedup is not possible if the query oracle has a chance of erring~\cite{rs08} or if we only have access to low-depth circuits~\cite{Zalka1997}.
This work demonstrates a natural setting where some, but not all, Grover speedups can be performed.
For $X$ a diagonal unitary matrix, we can estimate \emph{local} information about $X$, like a relative phase $X_i / X_1$, to $\eps$ error with $\bigO{1/\eps}$ forward queries.
But our work shows that estimating \emph{global} information requires $\bigOmega{1/\eps^2}$ forward queries, if we are not willing to pay a dimension dependence.
This tracks with our intuitive assessment from the introduction: algorithms with forward queries will wander out into $d$-dimensional Hilbert space, but if we are allowed $\poly(d)$ queries, we can handle this entire space, allowing us to achieve our desired $\bigO{1/\eps}$ scaling.

Looking forward, we see our results as relevant for developing an understanding of unitary complexity.
As the field of quantum algorithms has matured, its central ideas have been most effectively expressed as ``unitary problems'', problems where the input is black-box access to a unitary $U$.
This has led to a shift towards building notions of complexity where the unitary is a first-class object~\cite{lmw24,bempqy23,zhandry25}.
However, a basic question one encounters when defining these notions is: should we by default give access to just $U$, or $U$ and $U^\dagger$?
On one hand, for many problems inspired by experiment, we would like to explicitly forbid inverse queries when considering hardness.
On the other hand, we prove that, without inverse queries, we lose the most generic form of quantum speedup, which such a theory ought to capture.

\paragraph{Notation.}
We denote $\ii = \sqrt{-1}$.
We use sans serif script to refer to registers of a quantum state: for example, $\ket{\psi}_{\reg{P}}\ket{\phi}_{\reg{RS}}$ denotes the system where $\ket{\psi}$ occupies register $\reg{P}$ and $\ket{\phi}$ occupies registers $\reg{R}$ and $\reg{S}$.
We drop the register notation when the set of registers is clear from context.

\section{Defining the hard instance}

\begin{remark}[What about other versions of amplitude amplification and estimation?] \label{rmk:what-is-it}
    According to our definitions~(\cref{def:aa-ae}), in amplitude amplification and estimation, we are not given any information about $a$.
    We do this for simplicity: all our proofs continue to hold if we are also given an estimate $a^* > 0$ with the guarantee that $a^* \leq a \leq Ca^*$ for some large constant $C$.
    In fact, our $\bigOmega{1/\eps^2}$ lower bound for amplitude estimation should hold for any constant $a^*$, by padding our hard instance (\cref{def:ae-ensembles}) with $\phi I$ for $\phi \sim \mu_0$ and extending the proofs accordingly.
    Amplitude amplification is even a well-defined problem when we receive $a$ exactly, but because of technical limitations, our lower bounds do not extend to this setting.

    Our definition of amplitude estimation is estimation to \emph{additive} error.
    Amplitude estimation algorithms are typically stated with a relative-error guarantee, where the algorithm uses $\bigO{1/(a\eps)}$ queries to $U$ and $U^\dagger$ (or $\bigO{1/(a \eps)^2}$ queries to $U$ alone) to get an estimate $(1-\eps)a < \wh{a} < (1+\eps)a$.
    However, this guarantee is identical to the additive-error one we state, as can be seen by rescaling $a\eps \gets \eps$, so our lower bound applies to both.\footnote{
        It seems that relative-error is the prevailing version of the guarantee only because classical intuition suggests it could be stronger: estimating the bias of a coin with probability $p$ of heads to $\eps$ relative error requires $\bigO{\frac{1}{p\eps^2}}$ samples, so one might expect the dependence on $a$ to differ from the dependence on $\eps$.
        This turns out to not be the case.
    }
    We use the additive-error version because it removes the $a$ dependence from the complexity which distracts from the fundamental quadratic speedup in $\eps$.
\end{remark}

Throughout, we let $q$ be a positive integer, and we let $\omega = e^{2\pi \ii / q}$ be an order-$q$ root of unity.
Then the set $\braces{\omega^k}_{k \in [q]}$ forms a group under multiplication, which we call $\cyc_q$.
All that we will need of $q$ is that it is sufficiently large; for our results, we could take $q \to \infty$ without issue.
We work with finite $q$ for simplicity.

\begin{definition}[Biased and unbiased distributions over roots of unity]
    The \text{$\eps$-biased} distribution over phases, $\mu_\eps$, is defined in the following way.
    Let $\mu_0: \cyc_q \to \mathbb{R}$ be the uniform measure over $\cyc_q$: $\mu_0(g) = 1/q$ for all $g \in \cyc_q$.
    Let $\mu_1: \cyc_q \to \mathbb{R}$ be the uniform measure over the elements of $\cyc_q$ on the right half of the unit circle, $\braces{\omega^{-\floor{q/4}}, \dots, \omega^{\floor{q/4}}}$.
    Then $\mu_\eps$ is the linear interpolation between the two,
    \begin{align*}
        \mu_\eps(g) = (1 - \eps) \mu_0(g) + \eps \mu_1(g)
        = \begin{cases}
            \frac{1}{q}(1 + \eps\parens[\big]{\frac{q}{2\floor{q/4}+1} - 1 }) & g \in \braces{\omega^{-\floor{q/4}}, \dots, \omega^{\floor{q/4}}} \\
            \frac{1}{q}(1 - \eps) & \text{otherwise}
        \end{cases}
    \end{align*}
\end{definition}

\begin{lemma}[Expectations of the distributions] \label{lem:circle-means}
    For sufficiently large $q$, we have that
\begin{gather*}
    \E_{g \sim \mu_0} [g] = 0 \quad\text{and}\quad
    \frac{1}{2} < \E_{g \sim \mu_1} [g] < 1.
\end{gather*}
\end{lemma}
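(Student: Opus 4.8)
The plan is to evaluate both expectations in essentially closed form. For the unbiased case, $\E_{g \sim \mu_0}[g] = \frac{1}{q}\sum_{k=0}^{q-1}\omega^k = \frac{1}{q}\cdot\frac{\omega^q - 1}{\omega - 1} = 0$, using $\omega^q = 1$ and $\omega \neq 1$; this is exact for every $q \geq 2$.

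For the biased case, write $m = \floor{q/4}$, so $\mu_1$ is uniform on $\braces{\omega^{-m}, \dots, \omega^m}$ and $\E_{g \sim \mu_1}[g] = \frac{1}{2m+1}\sum_{k=-m}^{m}\omega^k$. The summand is symmetric under $k \mapsto -k$, so this quantity is real and equals $\frac{1}{2m+1}\sum_{k=-m}^{m}\cos(2\pi k / q)$. I would then invoke the Dirichlet kernel identity $\sum_{k=-m}^{m} e^{\ii\theta k} = \frac{\sin((m + 1/2)\theta)}{\sin(\theta/2)}$ with $\theta = 2\pi/q$ to obtain the clean closed form
\[
    \E_{g \sim \mu_1}[g] = \frac{\sin\!\left((2m+1)\pi/q\right)}{(2m+1)\,\sin(\pi/q)}.
\]

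Now take $q \to \infty$. Since $m = \floor{q/4}$, we have $(2m+1)/q \to \tfrac12$, hence $\sin((2m+1)\pi/q) \to \sin(\pi/2) = 1$; and $(2m+1)\sin(\pi/q) = \tfrac{2m+1}{q} \cdot \tfrac{\sin(\pi/q)}{1/q} \to \tfrac12 \cdot \pi = \tfrac{\pi}{2}$. Therefore $\E_{g \sim \mu_1}[g] \to \tfrac{2}{\pi} \approx 0.6366$, which lies strictly inside $(\tfrac12, 1)$, so there is a threshold $q_0$ beyond which the expectation itself lies in $(\tfrac12, 1)$---which is exactly the meaning of ``for sufficiently large $q$''.

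The computation is routine; the only point needing care is converting the limit into a concrete finite-$q$ threshold. This only requires elementary bounds---e.g.\ $\tfrac{2}{\pi}t \leq \sin t \leq t$ on $[0, \pi/2]$ together with $\abs{(2m+1)/q - \tfrac12} \leq 1/q$---to show that the finite-$q$ value deviates from $\tfrac{2}{\pi}$ by $O(1/q)$; the (very mild) obstacle is just verifying that $\tfrac{2}{\pi}$ is comfortably bounded away from both endpoints $\tfrac12$ and $1$, so that this crude error estimate already suffices. If one prefers to avoid the Dirichlet kernel, one can instead recognize $\tfrac{1}{q}\sum_{k=-m}^{m}\cos(2\pi k/q)$ as a Riemann sum for $\int_{-1/4}^{1/4}\cos(2\pi x)\, dx = 1/\pi$ and use $\tfrac{q}{2m+1} \to 2$, arriving at the same value $\tfrac{2}{\pi}$.
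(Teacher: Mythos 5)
Your proposal is correct and follows essentially the same route as the paper: the geometric-series cancellation for $\mu_0$, the Dirichlet kernel closed form for $\mu_1$, and the limit $2/\pi \in (1/2,1)$ as $q \to \infty$. The extra discussion of a concrete finite-$q$ threshold is fine but not needed; the paper simply concludes from the limit that the bound holds for sufficiently large $q$.
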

\begin{proof}
For the first equation,
\begin{align*}
    \E_{g \sim \mu_0} [g] = \frac{1}{q} \sum_{g \in \cyc_q} g = 0.
\end{align*}
As for $\mu_1$, we first observe that the expectation is always real; then, we bound it.
\begin{align*}
    \E_{g \sim \mu_1} [g]
    &= \frac{1}{2\floor{q/4} + 1} \sum_{k = -\floor{q/4}}^{\floor{q/4}} \omega^{k} \\
    &= \frac{1}{2\floor{q/4} + 1} \parens[\big]{1 + 2\cos(2\pi/q) + 2\cos(4\pi/q) + \dots + 2\cos(2\pi \floor{q/4}/q)} \\
    &= \frac{1}{2\floor{q/4} + 1} \frac{\sin((2\floor{q/4} + 1) \pi/q)}{\sin(\pi/q)}
\end{align*}
The last step follows from a standard derivation: the sum appearing above is known as the Dirichlet kernel.
So, we can take a limit:
\begin{align*}
    \lim_{q \to \infty} \E_{g \sim \mu_1} [g]
    = \lim_{q \to \infty}\frac{1}{(2\floor{q/4} + 1)\sin(\pi/q)}
    = \frac{2}{\pi}.
\end{align*}
Thus, for sufficiently large $q$, $1/2 < \E_{g \sim \mu_1}[g] < 1$ as desired.
\end{proof}

\begin{definition}[Diagonal unitary ensembles] \label{def:ae-ensembles}
    For a dimension $d$, a bias parameter $\eps \in (0,1)$, and an order parameter $q$, we consider two different ensembles of unitary matrices.
    In both ensembles, $U \in \mathbb{C}^{d \times d}$ is a random diagonal matrix where
    \begin{enumerate}
        \setcounter{enumi}{-1}
        \item $U$'s diagonal entries, $U_i$, are drawn i.i.d.\ from $\mu_0$;
        \item $U$'s diagonal entries, $U_i$, are drawn i.i.d.\ from $\mu_\eps$.
    \end{enumerate}
    We call these ensemble 0 (the unbiased ensemble) and ensemble 1 (the biased ensemble).
\end{definition}

We consider the problem of distinguishing these two ensembles.

\begin{problem}[Distinguishing unitary ensembles] \label{prob:ae-distinguish}
    For parameters $d, \eps, q$, consider the following problem.
    Let $b \in \braces{0,1}$ be a bit which is zero or one with probability $1/2$.
    Then draw a unitary $U$ from ensemble $b$ (\cref{def:ae-ensembles}).
    The goal is to, given quantum access to the unitary $U$, output a bit $\wh{b} \in \braces{0,1}$ such that $\wh{b} = b$ with good probability.
\end{problem}

This problem is inverse-invariant: for $U$ drawn from one of the ensembles, $U^\dagger$ follows the same ensemble.
Therefore, the problem is exactly as difficult with only forward queries as it is with only inverse queries, in contrast to problems like permutation inversion (see \cref{subsec:discussion}).

We begin by showing that this problem can be solved using amplitude estimation, and therefore enjoys a quadratic quantum speedup when we have access to $U$ and $U^\dagger$.

\begin{lemma}[Normalized traces concentrate] \label{lem:traces-concentrate}
    Let $V \in \C^{d \times d}$ be a random diagonal unitary matrix, whose diagonal entries $V_i$ are drawn from independent distributions.
    Then, provided $d \geq C / t^2$ for $C$ a sufficiently large constant,
    \begin{align*}
        \Pr\bracks[\Big]{\abs[\Big]{\ntr(V) - \E[\ntr(V)]} \geq t} \leq 0.01.
    \end{align*}
\end{lemma}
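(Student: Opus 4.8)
The plan is to apply a standard concentration inequality for sums of bounded independent random variables, separately to the real and imaginary parts. Write $\ntr(V) = \frac{1}{d}\sum_{i=1}^d V_i$, where the $V_i$ are independent and each lies on the unit circle, so $|V_i| = 1$. Split $V_i = \Re(V_i) + \ii\,\Im(V_i)$; then $\Re(\ntr(V)) = \frac{1}{d}\sum_i \Re(V_i)$ and $\Im(\ntr(V)) = \frac{1}{d}\sum_i \Im(V_i)$ are each averages of $d$ independent random variables bounded in $[-1,1]$. By Hoeffding's inequality applied to $\frac{1}{d}\sum_i \Re(V_i)$, we get $\Pr[|\Re(\ntr(V)) - \E[\Re(\ntr(V))]| \geq t/\sqrt 2] \leq 2\exp(-d t^2 / 4)$, and likewise for the imaginary part. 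A union bound over the two events, combined with $|z| \le \sqrt{2}\max(|\Re z|, |\Im z|)$ (or just $|z|^2 = |\Re z|^2 + |\Im z|^2$), yields $\Pr[|\ntr(V) - \E[\ntr(V)]| \ge t] \le 4\exp(-dt^2/4)$.

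Concretely, the steps in order are: (1) observe $|V_i| \le 1$ so that $\Re(V_i), \Im(V_i) \in [-1,1]$ and the two coordinate sums are each sums of $d$ independent variables of range at most $2$; (2) invoke Hoeffding (or Azuma) on each coordinate with deviation parameter $t/\sqrt 2$, getting tail bound $2\exp(-dt^2/4)$ for each; (3) union-bound to bound the probability that either coordinate deviates by $t/\sqrt 2$, which forces $|\ntr(V) - \E[\ntr(V)]| < t$; (4) finally, choose $C$ large enough that $d \ge C/t^2$ makes $4\exp(-dt^2/4) \le 4\exp(-C/4) \le 0.01$, which holds for, say, $C = 24$.

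There is essentially no main obstacle here: the only thing to be slightly careful about is that $\ntr(V)$ is complex-valued, so one cannot apply a scalar Hoeffding bound directly to it, and one must either pass through the real/imaginary decomposition as above or cite a vector-valued concentration inequality. The bookkeeping with the $\sqrt 2$ factors and the exact constant $C$ is routine; I would simply remark that the bound follows from Hoeffding's inequality applied to the real and imaginary parts, together with a union bound, for an appropriate choice of the absolute constant $C$.
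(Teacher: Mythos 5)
Your proposal is correct and matches the paper's proof: both apply Hoeffding's inequality to the real and imaginary parts of $\ntr(V)$ separately and combine with a union bound, absorbing the constants into $C$. The only cosmetic difference is that you split the deviation as $t/\sqrt{2}$ per coordinate (via $|z|^2 = |\Re z|^2 + |\Im z|^2$) while the paper uses $t/2$; both are fine.
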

\begin{proof}
This is a consequence of Hoeffding's inequality~\cite[Theorem 2.2.6]{Ver18}, which states that for independent random variables $X_1,\dots,X_d$ such that $X_i \in [-1,1]$ for every $i$, then
\begin{align*}
    \Pr\bracks[\Big]{\abs[\Big]{\frac{1}{d}\sum_{i=1}^d (X_i - \E[X_i])} \geq t} \leq 2e^{-dt^2/2}.
\end{align*}
Then $\Re(V_i)$ are independent random variables between $-1$ and $1$, and similarly for $\Im(V_i)$.
So,
\begin{align*}
    &\Pr\bracks[\Big]{\abs[\Big]{\ntr(V) - \E[\ntr(V)]} \geq t} \\
    &\leq \Pr\bracks[\Big]{\abs[\Big]{\Re(\ntr(V) - \E[\ntr(V)])} \geq t/2} + \Pr\bracks[\Big]{\abs[\Big]{\Im(\ntr(V) - \E[\ntr(V)])} \geq t/2} \\
    &\leq 2e^{-d(t/2)^2/2} + 2e^{-d(t/2)^2/2} \leq 0.01.
\end{align*}
In the final step, we used that $d \geq C/t^2$ for $C$ sufficiently large.
\end{proof}

\begin{lemma}[Normalized trace is a distinguisher for the ensembles] \label{lem:traces-distinguish}
    Let $U^{(0)}$ and $U^{(1)}$ be random unitary matrices drawn from ensemble 0 and ensemble 1 of \cref{def:ae-ensembles}, respectively, where $\eps > 0$, $d \geq C / \eps^2$, and $q \geq C$ for $C$ a sufficiently large constant.
    Then
    \begin{enumerate}
        \setcounter{enumi}{-1}
        \item With probability $\geq 0.99$, $\abs{\ntr(U^{(0)})} < 0.1\eps$;
        \item With probability $\geq 0.99$, $\abs{\ntr(U^{(1)})} \geq 0.2\eps$.
    \end{enumerate}
\end{lemma}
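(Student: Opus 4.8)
The plan is to derive both parts from the concentration statement of \cref{lem:traces-concentrate} together with an explicit computation of $\E[\ntr(U^{(b)})]$, which reduces via \cref{lem:circle-means} to the means of $\mu_0$ and $\mu_1$. Since each $U^{(b)}$ is diagonal with independent entries, $\ntr(U^{(b)}) = \frac1d\sum_{i=1}^d U_i^{(b)}$ is exactly the kind of average to which \cref{lem:traces-concentrate} applies, with $V = U^{(b)}$ and the threshold $t$ taken proportional to $\eps$.

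First I would handle ensemble $0$. Here the entries are i.i.d.\ $\mu_0$, so $\E[U_i^{(0)}] = 0$ by \cref{lem:circle-means}, hence $\E[\ntr(U^{(0)})] = 0$. Taking $t = 0.1\eps$ in \cref{lem:traces-concentrate} --- legitimate because $d \geq C/\eps^2$ forces $d \geq C'/t^2$ once $C$ is chosen large enough relative to the absolute constant appearing in that lemma --- gives $\abs{\ntr(U^{(0)})} < 0.1\eps$ with probability at least $0.99$, which is part~(0).

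Next, ensemble $1$: the entries are i.i.d.\ $\mu_\eps = (1-\eps)\mu_0 + \eps\mu_1$, so by linearity and \cref{lem:circle-means}, $\E[U_i^{(1)}] = (1-\eps)\cdot 0 + \eps\,\E_{g\sim\mu_1}[g] = \eps\,\E_{g\sim\mu_1}[g]$, a real number strictly between $\eps/2$ and $\eps$ (here using $q \geq C$). Thus $\abs{\E[\ntr(U^{(1)})]} > \eps/2$. Applying \cref{lem:traces-concentrate} again with $t = 0.1\eps$ and then the reverse triangle inequality, with probability at least $0.99$ we have $\abs{\ntr(U^{(1)})} \geq \abs{\E[\ntr(U^{(1)})]} - 0.1\eps > 0.5\eps - 0.1\eps = 0.4\eps \geq 0.2\eps$, which is part~(1).

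I do not expect a genuine obstacle here; the statement is essentially a packaging of the two preceding lemmas. The only points requiring care are the constant bookkeeping --- ensuring the hypothesis $d \geq C/\eps^2$ is strong enough to feed into \cref{lem:traces-concentrate} at threshold $0.1\eps$, and recording that $q \geq C$ is what makes \cref{lem:circle-means} confine $\E_{g\sim\mu_1}[g]$ to $(1/2,1)$ --- together with the trivial observation that $\ntr$ of a diagonal matrix is the average of its diagonal entries, which is what lets us invoke the independence hypothesis of \cref{lem:traces-concentrate}.
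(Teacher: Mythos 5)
Your proposal is correct and matches the paper's proof essentially step for step: both apply \cref{lem:traces-concentrate} at threshold $0.1\eps$ to each ensemble, compute the means via \cref{lem:circle-means} (zero for ensemble 0, in $(\eps/2,\eps)$ for ensemble 1), and finish with the reverse triangle inequality for ensemble 1. No gaps.
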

\begin{proof}
Recall from \cref{def:ae-ensembles} that $U^{(0)}$ and $U^{(1)}$ are diagonal unitary matrices whose entries are independent random variables.
Further, we have that $d \geq C/(\eps^2)$ for $C$ sufficiently large, so we can apply \cref{lem:traces-concentrate} to get that
\begin{align}
    \Pr\bracks[\Big]{\abs[\Big]{\ntr(U^{(0)}) - \E[\ntr(U^{(0)})]} \geq 0.1 \eps} &\leq 0.01;
    \label{eq:tr0-concentrates} \\
    \Pr\bracks[\Big]{\abs[\Big]{\ntr(U^{(1)}) - \E[\ntr(U^{(1)})]} \geq 0.1 \eps} &\leq 0.01.
    \label{eq:tr1-concentrates}
\end{align}
It remains to compute these expectations.
By \cref{lem:circle-means},
\begin{align*}
    \E[\ntr(U^{(0)})] &= \E_{g \sim \mu_0} [g] = 0 \\
    \E[\ntr(U^{(1)})]
    &=\E_{g \sim \mu_\eps} [g]
    = (1 - \eps)\E_{g \sim \mu_0}[g] + \eps\E_{g \sim \mu_1}[g]
    \in (\eps/2, \eps).
\end{align*}
This, together with \eqref{eq:tr0-concentrates}, implies that, with probability $\geq 0.99$,
\begin{align*}
    \abs{\ntr(U^{(0)})}
    &= \abs{\ntr(U^{(0)}) - \E[\ntr(U^{(0)})]}
    < 0.1 \eps.
\end{align*}
Next, we consider the behavior of $\ntr(U^{(1)})$.
Because of \eqref{eq:tr1-concentrates}, with probability $\geq 0.99$, we have that
\begin{equation*}
    \abs{\ntr(U^{(1)})}
    \geq \abs{\E[\tr(U^{(1)})]} - \abs{\tr(U^{(1)}) - \E[\tr(U^{(1)})]}
    \geq 0.5 \eps - 0.1 \eps \geq 0.2\eps. \qedhere
\end{equation*}
\end{proof}

\begin{proposition} \label{prop:ae-reduction}
    Consider the distinguishing task, \cref{prob:ae-distinguish}, with parameters $\eps > 0$, $d \geq C/\eps^2$, and $q \geq C$ with $C$ a sufficiently large constant.
    Then there is an algorithm which succeeds at this task with probability $\geq 0.9$, using one call to amplitude estimation with error parameter $0.05\eps$ (\cref{def:aa-ae}).
\end{proposition}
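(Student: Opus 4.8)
The plan is to reduce the distinguishing task to a single invocation of amplitude estimation by manufacturing, out of one forward and one inverse query to $U$, a unitary $X$ of the form in \cref{def:aa-ae} whose amplitude parameter is exactly $a = \abs{\ntr(U)}$. Once we have such an $X$, an estimate of $a$ to error $0.05\eps$ separates the two ensembles by \cref{lem:traces-distinguish}, since in ensemble $0$ we have $a < 0.1\eps$ and in ensemble $1$ we have $a \geq 0.2\eps$ with probability $\geq 0.99$.

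First I would build $X$. Fix any unitary $F \in \C^{d \times d}$ with $F\ket{0} = \tfrac{1}{\sqrt d}\sum_i \ket i$ (e.g.\ a Fourier transform); $F$ and $F^\dagger$ require no queries to $U$. Since $U$ is diagonal, $U F\ket0 = \tfrac1{\sqrt d}\sum_i U_i\ket i$, so the amplitude of $\ket 0$ in $F^\dagger U F \ket 0$ equals $\tfrac1d\sum_i U_i = \ntr(U)$; write $F^\dagger U F\ket 0 = \ntr(U)\ket 0 + \ket{\psi^\perp}$ with $\ket{\psi^\perp} \perp \ket 0$. Now adjoin a fresh flag qubit and apply the unitary $V$ that fixes $\ket i \ket 0$ when $i = 0$ and sends $\ket i\ket 0 \mapsto \ket i \ket 1$ when $i \neq 0$ (a permutation of computational basis states, hence unitary). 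Setting $X := V\,(F^\dagger U F \otimes I)$ on $\C^d \otimes \C^2$, a one-line computation gives
\begin{align*}
    X\ket 0\ket 0 \;=\; \ntr(U)\,\ket 0\ket 0 + \ket{\psi^\perp}\ket 1 \;=\; a\,\ket{\phi_\good}\ket 0 + \sqrt{1-a^2}\,\ket{\phi_\bad}\ket 1,
\end{align*}
where $a = \abs{\ntr(U)} \in [0,1]$, the state $\ket{\phi_\good}$ is $\ket 0$ up to the phase of $\ntr(U)$, and $\ket{\phi_\bad}$ is the normalization of $\ket{\psi^\perp}$ (taken arbitrary in the degenerate case $a=1$). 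Thus $X$ has exactly the form demanded by \cref{def:aa-ae}; one application of $X$ costs one application of $U$, and one application of $X^\dagger = (F^\dagger U^\dagger F \otimes I)V^\dagger$ costs one application of $U^\dagger$.

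Then I would run amplitude estimation on $X$ with error parameter $0.05\eps$ to get $\wh a$, and output $\wh b = 1$ if $\wh a \geq 0.15\eps$ and $\wh b = 0$ otherwise. For correctness, intersect two events: by \cref{lem:traces-distinguish} (whose hypotheses $d \geq C/\eps^2$ and $q \geq C$ are exactly those assumed here), with probability $\geq 0.99$ over the draw of $U$ we have $a < 0.1\eps$ when $b=0$ and $a \geq 0.2\eps$ when $b=1$; and with probability $\geq 0.99$ amplitude estimation returns $\wh a$ with $\abs{a - \wh a} < 0.05\eps$. On the intersection, $b=0$ forces $\wh a < 0.15\eps$ and $b=1$ forces $\wh a > 0.15\eps$, so $\wh b = b$; a union bound gives success probability $\geq 0.98 \geq 0.9$. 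I do not expect a real obstacle here: the only delicate points are cosmetic---absorbing the complex phase of $\ntr(U)$ into $\ket{\phi_\good}$ so that the reported amplitude is real and nonnegative, and checking that $V$ is unitary---while all the substance is in \cref{lem:traces-distinguish}, which is already in hand.
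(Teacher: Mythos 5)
Your proposal is correct and matches the paper's proof essentially step for step: the same construction $Z\,T^\dagger U T$ (your $V\,(F^\dagger U F\otimes I)$, with the flag convention trivially swapped), one call to amplitude estimation with error $0.05\eps$, a threshold at $0.15\eps$, and a union bound over the concentration event of \cref{lem:traces-distinguish} and the success of amplitude estimation.
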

\begin{proof}
Let $T \in \C^{d \times d}$ be a unitary matrix which prepares the uniform superposition, $T \ket{0} = \frac{1}{\sqrt{d}} \sum_{i=0}^{d-1} \ket{i}$, and let $Z \in \C^{2d \times 2d}$ be a unitary matrix which ``marks'' $\ket{0}$: $Z \ket{0} \ket{j} = \ket{0}\ket{j \oplus 1}$, and $Z \ket{i}\ket{j} = \ket{i}\ket{j}$ for $i \neq 0$.
Then the quantum circuit
\begin{align*}
    \Qcircuit @R=1em @C=1em {
        \lstick{\ket{0}_{\reg{R}}} & \qw & \gate{T} & \gate{U} & \gate{T^\dagger} & \multigate{1}{Z} & \qw \\
    \lstick{\ket{0}_{\reg{S}}} & \qw & \qw & \qw & \qw & \ghost{Z} & \qw
    }
\end{align*}
outputs the state
\begin{align*}
    \ntr(U) \ket{0}_{\reg{R}} \ket{1}_{\reg{S}} + \sqrt{1 - \abs{\ntr(U)}^2} \ket{\phi}_{\reg{R}} \ket{0}_{\reg{S}}.
\end{align*}
By calling amplitude estimation (\cref{def:aa-ae}) on this circuit, $Z_{\reg{RS}} T^\dagger_{\reg{R}} U_{\reg{R}} T_{\reg{R}}$, we get an estimate $\wh{a}$ such that, with probability $\geq 0.99$, $\abs{\ntr(U)} - 0.05\eps < \wh{a} < \abs{\ntr(U)} + 0.05\eps$.

By \cref{lem:traces-distinguish}, when $U$ comes from ensemble 0, $\wh{a} \in (-0.15\eps, 0.15\eps)$ with probability $\geq 0.98$.
On the other hand, when $V$ comes from ensemble 1, $\wh{a} \not\in (-0.15\eps, 0.15\eps)$ with probability $\geq 0.98$.
So, if we output $\wh{b} = 0$ when $\wh{a} \in (-0.15\eps, 0.15\eps)$, and $\wh{b} = 1$ otherwise, then $\wh{b}$ is the correct ensemble with probability $\geq 0.9$.
\end{proof}

\section{Proving the lower bound for amplitude estimation}

In this section, we prove the following theorem.

\begin{theorem} \label{thm:lower-ae}
    Consider \cref{prob:ae-distinguish} with $\eps \in (0, 1/2)$ and $q \geq \frac{1}{\eps^2}$.
    Then any algorithm to solve it with probability $\geq \frac23$ requires $n \geq \frac{1}{12\eps^2}$ applications of $U$.
\end{theorem}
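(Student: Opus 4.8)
The plan is to carry out the compressed-oracle argument sketched in the introduction, reducing the bound to one ``$\eps^2$'' orthogonality fact about the purifications. First I would put the algorithm in standard form --- input-independent unitaries $A_0,A_1,\dots,A_n$ on the algorithm's registers, alternating with $n$ queries to an oracle acting on a designated $d$-dimensional query register $\reg{R}$ --- and, since $U$ is diagonal with entries in $\cyc_q$, purify the random $U$ into a register $\reg{P}=\reg{P}_1\cdots\reg{P}_d$ of $\cyc_q$-registers, so that the oracle becomes the \emph{single} fixed unitary $\mathsf{O}\colon\ket{i}_{\reg{R}}\ket{g_1\cdots g_d}_{\reg{P}}\mapsto g_i\ket{i}_{\reg{R}}\ket{g_1\cdots g_d}_{\reg{P}}$, while ensemble $b\in\{0,1\}$ just means initializing $\reg{P}$ in the state $\ket{\psi_0}:=\ket{\mu_0}^{\otimes d}$ or $\ket{\psi_1}:=\ket{\mu_\eps}^{\otimes d}$, where $\ket{\mu_0}=\sum_{g\in\cyc_q}\sqrt{\mu_0(g)}\,\ket{g}$ and similarly for $\ket{\mu_\eps}$. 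The algorithm's purified final state is then $\ket{\alg^{(b)}}=A_n\mathsf{O}A_{n-1}\cdots\mathsf{O}A_0\bigl(\ket{0}\otimes\ket{\psi_b}\bigr)$, and its output distribution is a function of $\rho^{(b)}:=\tr_{\reg{P}}\proj{\alg^{(b)}}$. By the Helstrom bound, success probability $\geq\tfrac23$ forces $\|\rho^{(0)}-\rho^{(1)}\|_1\geq\tfrac23$; and since $\reg{P}$ is traced out, $\|\rho^{(0)}-\rho^{(1)}\|_1\leq 2\bigl\|\ket{\alg^{(0)}}-V\ket{\alg^{(1)}}\bigr\|$ for \emph{any} unitary $V$ on $\reg{P}$. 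So it suffices to exhibit one such $V$ with $\bigl\|\ket{\alg^{(0)}}-V\ket{\alg^{(1)}}\bigr\|<\tfrac13$ whenever $n<\tfrac{1}{12\eps^2}$.

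For $V$ I would use the biased Fourier transform of \cref{lem:biased-ft}. Let $F_1=F_\eps$ be a unitary on a single $\cyc_q$-register with $F_\eps\ket{0}=\ket{\mu_\eps}$, and $F_0=F$ the ordinary Fourier transform with $F\ket{0}=\ket{\mu_0}$, so that $\ket{\psi_b}=F_b^{\otimes d}\ket{0}^{\otimes d}$; set $V=(FF_\eps^\dagger)^{\otimes d}$. Each $A_t$ acts trivially on $\reg{P}$, hence commutes with $(F_b^\dagger)^{\otimes d}$, so conjugating gives, for both $b$, with $\mathsf{O}_b:=(F_b^\dagger)^{\otimes d}\mathsf{O}F_b^{\otimes d}$,
\[
    (F_b^\dagger)^{\otimes d}\ket{\alg^{(b)}}\;=\;A_n\,\mathsf{O}_b\,A_{n-1}\cdots\mathsf{O}_b\,A_0\bigl(\ket{0}\otimes\ket{0}^{\otimes d}\bigr), \qquad \mathsf{O}_b\;=\;\sum_i\proj{i}_{\reg{R}}\otimes\bigl(F_b^\dagger G F_b\bigr)_{\reg{P}_i},
\]
where $G\ket{g}=g\ket{g}$ and the conjugating factors on the registers $\reg{P}_j$ with $j\neq i$ cancel. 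Applying $(F^\dagger)^{\otimes d}$ (norm-preserving) to $\ket{\alg^{(0)}}-V\ket{\alg^{(1)}}$ thus turns the comparison into one between the \emph{same} circuit on the \emph{same} input $\ket{0}\otimes\ket{0}^{\otimes d}$, run with $\mathsf{O}_0$ in every query slot versus with $\mathsf{O}_1$ in every query slot. Here $\mathsf{O}_0$ is just the cyclic shift $\ket{k}\mapsto\ket{k+1}$ on each queried cell, while $\mathsf{O}_1$ is a biased shift with $\|\mathsf{O}_1-\mathsf{O}_0\|_{\mathrm{op}}=\bigO{\eps}$: the key point is $\langle\mu_0|\mu_\eps\rangle=1-\bigO{\eps^2}$, i.e.\ the first-order-in-$\eps$ change of $\ket{\mu_\eps}$ is orthogonal to $\ket{\mu_0}$ --- exactly the content of the Dirichlet-kernel computation behind \cref{lem:circle-means} --- so $F_\eps$ can be chosen with $\|F_\eps-F\|_{\mathrm{op}}=\bigO{\eps}$.

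A standard hybrid over the $n$ slots writes the difference as $\sum_{t=1}^n\ket{w_t}$, where $\ket{w_t}$ is the effect of swapping $\mathsf{O}_0\to\mathsf{O}_1$ in slot $t$ alone; each $\|\ket{w_t}\|\leq\|\mathsf{O}_1-\mathsf{O}_0\|_{\mathrm{op}}=\bigO{\eps}$, which via the triangle inequality alone yields only the useless $\bigO{\eps n}$. The gain to $\bigOmega{1/\eps^2}$ comes from \cref{lem:eps-squared}: to leading order in $\eps$ the $\ket{w_t}$ are \emph{pairwise orthogonal}, so they add in quadrature, $\bigl\|\sum_t\ket{w_t}\bigr\|^2\leq\eps^2 n$ after tracking constants (here $q\geq1/\eps^2$ is used to absorb the $\bigO{1/q}$ discretization error of $\mathrm{supp}(\mu_1)$ into the $\bigO{\eps^2}$ budget, and one checks the higher-order-in-$\eps$ contributions are negligible). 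This is $<\tfrac{1}{12}<\tfrac19$ once $n<\tfrac{1}{12\eps^2}$, contradicting $\|\rho^{(0)}-\rho^{(1)}\|_1\geq\tfrac23$. The orthogonality is also exactly where forward-only access enters: the $\eps$-linear part of $\mathsf{O}_1$, applied to a queried cell, kicks its ``empty'' Fourier state $\ket{0}$ into the orthogonal complement --- it deposits a fresh ``mark'' on that coordinate --- and since every later forward query only applies the shift $G$, which merely permutes each cell's basis, it can never undo a mark; so a mark placed at step $t$ is supported on a different family of $\reg{P}$-basis states than one placed at step $t'\neq t$, forcing $\ket{w_t}\perp\ket{w_{t'}}$. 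With an $\mathsf{O}^\dagger$ available this collapses --- a later inverse query can erase a mark, letting the $\ket{w_t}$ interfere and restoring the $\eps n$ behavior --- which is precisely the loophole that amplitude estimation exploits.

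I expect the one genuine obstacle to be proving this leading-order pairwise orthogonality (\cref{lem:eps-squared}): the purification, the biased-Fourier conjugation, the operator-norm bound, and the hybrid are all routine, but making ``leading order'' rigorous means isolating the exact $\eps$-linear part of $\mathsf{O}_1$, verifying it is strictly off-diagonal on empty cells in the Fourier basis, and confirming --- through the compressed-oracle bookkeeping of which coordinates have been marked and with what --- that neither the hybrid cross terms nor the $\bigO{\eps^2}$ remainders spoil the $\eps^2 n$ total.
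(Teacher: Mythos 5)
Your setup---purifying the random diagonal $U$ into a register $\reg{P}$, invoking Helstrom, and then applying a (biased) Fourier transform to $\reg{P}$ so that the two ensembles become the same circuit run with two nearby oracles---matches the paper's proof, and your closing intuition about ``marks'' that forward queries cannot erase is the right picture. But the quantitative mechanism you propose for the $\bigOmega{1/\eps^2}$ bound does not work. You hybridize over the $n$ query slots and claim the increments $\ket{w_t}$ are pairwise orthogonal ``to leading order in $\eps$,'' so that they add in quadrature. First, they are not: when the same coordinate is queried repeatedly (or in superposition), the increment from a later query overlaps the increment from an earlier one at order comparable to their norms---already for $d=1$ and two queries with trivial interleaved unitaries, $\ket{w_1}$ and $\ket{w_2}$ both acquire $\bigO{\eps}$ components on the \emph{same} lower Fourier levels of the \emph{same} cell, so nothing forces $\braket{w_1}{w_2}=0$ at order $\eps^2$. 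Second, and more fatally, ``orthogonal to leading order'' is not enough even if true: with cross terms of size $\bigO{\eps^3}$ you would still accumulate $n^2 \cdot \eps^3$, which diverges at the relevant scale $n \sim 1/\eps^2$. To get $\norm{\sum_t \ket{w_t}}^2 = \bigO{n\eps^2}$ you need the cross terms to essentially vanish, which is a global structural fact, not a perturbative one. The paper never hybridizes: because $U$ is diagonal, it computes the final purification in closed form as a sum over Feynman-path \emph{histograms} $h$ with $h_1+\dots+h_d=n$, and \cref{lem:biased-ft} puts each histogram component into the form (amplitude $\geq \sqrt{1-4n\eps^2}$ on the basis state $\bigotimes_i\ket{h_i}$) plus (error supported entirely on total count $<n$). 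The one orthogonality used is between the count-$n$ sector and the count-$<n$ sector---a single exact grading---and \cref{lem:eps-squared} then converts this into a trace-distance bound of $\bigO{n\eps^2}$ (not $\bigO{\sqrt{n\eps^2}}$), because errors orthogonal to \emph{all} the main components only perturb probabilities, not the post-measurement states. Your citation of \cref{lem:eps-squared} as a pairwise-orthogonality-of-increments statement misreads it.

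Two smaller issues. Your bound $\norm{F_\eps - F}_{\mathrm{op}} = \bigO{\eps}$ is justified only by $\braket{\mu_0}{\mu_\eps} = 1-\bigO{\eps^2}$, which controls a single column; column-wise closeness of two unitaries does not imply operator-norm closeness (a unitary can move every basis vector by $\bigO{\eps}$ while moving their uniform superposition by $\Theta(1)$), so this step needs a real argument for the specific Gram--Schmidt construction of \cref{lem:biased-ft}. And even if repaired, your route bounds the trace distance by the Euclidean distance of the purifications, which is $\Theta(\sqrt{n}\eps)$ rather than the paper's $\bigO{n\eps^2}$; this still yields $n = \bigOmega{1/\eps^2}$ but not the stated constant $n \geq \frac{1}{12\eps^2}$ (and ``$\leq 2/3$'' versus ``$\geq 2/3$'' is not quite a contradiction as written).
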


With this, we can prove our main theorem as a corollary.

\begin{proof}[Proof of \cref{thm:lower-ae-main}]
Let $C$ be the sufficiently large constant from \cref{prop:ae-reduction}.
We will assume that $\epsilon \in (0,0.025)$ and that $d > 4C$, as the theorem is vacuously true when these bounds do not hold.
Let
\begin{equation*}
    \epsilon^+ = \max\{\epsilon/0.05, \sqrt{C/d}\},
\end{equation*}
and note that $\epsilon^+ \in (0, 1/2)$ by our assumptions on $\epsilon$ and $d$.
Consider the distinguishing task from \cref{prob:ae-distinguish} with bias parameter $\epsilon^+$
and order parameter
$q = \lceil\max\{1/\eps^2, C\}\rceil$.
As $d \geq C/(\epsilon^+)^2$ and $q \geq C$, we can apply \cref{prop:ae-reduction},
which states that there is an algorithm which succeeds at this distinguishing task with probability $\geq 0.9$ using one call to amplitude estimation.
For this algorithm, it suffices to call amplitude estimation with error parameter $\epsilon$, as $\epsilon \leq 0.05 \epsilon^+$.
But since $\epsilon^+ \in (0, 1/2)$ and $q \geq 1/\epsilon^2$, we can apply \cref{thm:lower-ae}, which states that this algorithm requires
\begin{equation*}
    n \geq \frac{1}{12(\epsilon^+)^2}
    = \Omega(\min(d,1/\eps^2))
\end{equation*}
applications of $U$.
As the only applications of $U$ in the algorithm occur in the amplitude estimation subroutine,
amplitude estimation with error $\epsilon$ requires this many applications of $U$.
\end{proof}

The crux of this proof is to extract a $\bigOmega{1/\eps^2}$ lower bound from a purification of the algorithm, when run on the two different ensembles of \cref{prob:ae-distinguish}.
When the ensemble changes, the only thing that changes in the output state is that the input unitary $U$ is sampled from a different distribution; so, the difference in the output states is located solely in the purification register.
What matters is whether this difference is ``coherent'' or ``incoherent'': does the difference affect the states in the final distribution, or only the probabilities of these states?
For example, consider the un-normalized state
\begin{align*}
    \ket{\psi}_{\reg{OP}} = \ket{0}_{\reg{O}}\ket{0}_{\reg{P}} + \ket{1}_{\reg{O}}\ket{1}_{\reg{P}}.
\end{align*}
We can perturb the purification register by $\eps$, to get
\begin{align*}
    \ket{\psi'}_{\reg{OP}} = \ket{0}_{\reg{O}}\parens[\big]{\sqrt{1-\eps^2}\ket{0} + \eps\ket{\bot}}_{\reg{P}} + \ket{1}_{\reg{O}}\parens[\big]{\sqrt{1-\eps^2}\ket{1} + \eps\ket{\bot}}_{\reg{P}}.
\end{align*}
Then one can verify that $\tr_\reg{P}(\ketbra{\psi}{\psi}) = \bracks[\big]{\begin{smallmatrix} 1 & 0 \\ 0 & 1 \end{smallmatrix}}$ and $\tr_\reg{P}(\ketbra{\psi'}{\psi'}) = \bracks[\big]{\begin{smallmatrix} 1 & \eps^2 \\ \eps^2 & 1 \end{smallmatrix}}$ differ by $\bigTheta{\eps^2}$ in trace distance.
On the other hand, if we change the purification register by $\eps$ in a different fashion,
\begin{align*}
    \ket{\psi''}_{\reg{OP}} = \ket{0}_{\reg{O}}\parens[\big]{\sqrt{1-\eps^2}\ket{0} + \eps\ket{1}}_{\reg{P}} + \ket{1}_{\reg{O}}\parens[\big]{\sqrt{1-\eps^2}\ket{1} + \eps\ket{0}}_{\reg{P}}.
\end{align*}
Then $\tr_\reg{P}(\ketbra{\psi''}{\psi''}) = \bracks[\big]{\begin{smallmatrix} 1 & 2\eps\sqrt{1-\eps^2} \\ 2\eps\sqrt{1-\eps^2} & 1 \end{smallmatrix}}$ differs from $\tr_\reg{P}(\ketbra{\psi}{\psi})$ by $\bigTheta{\eps}$ in trace distance.
In other words, if the $\eps$-sized deviation between the two purifications arises in a part of the space which is orthogonal to the purification itself, then this only affects the state after tracing out by $\bigO{\eps^2}$, which is quadratically smaller than typical.
We capture this phenomenon in the following lemma.

\begin{lemma}[Controlling orthogonal errors in the purification register] \label{lem:eps-squared}
    Let $\ket{\psi}_{\reg{OP}} = \sum_{i=1}^r \ket{\phi_i}_{\reg{O}} \ket{\varphi_i}_\reg{P}$ be a quantum state (i.e.\ $\braket{\psi}{\psi} = 1$), where the $\braces{\ket{\varphi_i}}_{i}$'s are orthonormal (and the $\ket{\phi_i}$'s can have norm smaller than one).
    Let $\ket{\wt{\psi}}_{\reg{OP}}$ be a state which differs from $\ket{\psi}_{\reg{OP}}$ only on register $\reg{P}$:
    \begin{align*}
        \ket{\wt{\psi}}_{\reg{OP}} = \sum_{i=1}^r \ket{\phi_i}_\reg{O} (\alpha_i\ket{\varphi_i}_\reg{P} + \ket{\err_i}_\reg{P}).
    \end{align*}
    Further suppose that, for all $i$, $\abs{\alpha_i}^2 \geq 1 - \delta$, and for all $i$ and $j$, $\braket{\varphi_i}{\err_j} = 0$.
    Then
    \begin{align*}
        \frac12 \trnorm[\Big]{\tr_\reg{P}(\proj{\psi} - \proj{\wt{\psi}})} \leq \delta.
    \end{align*}
\end{lemma}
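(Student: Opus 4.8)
The plan is to compute the two reduced states $\rho := \tr_\reg{P}(\proj{\psi})$ and $\sigma := \tr_\reg{P}(\proj{\wt{\psi}})$ explicitly, write their difference as a small piece built from the shrinking coefficients $\alpha_i$ minus a positive operator $E$ built from the errors $\ket{\err_i}$, and then read off the bound from the variational characterization of trace distance. First I would use the orthonormality $\braket{\varphi_i}{\varphi_j} = \delta_{ij}$ to get $\rho = \sum_i \proj{\phi_i}$, noting that $\sum_i \braket{\phi_i}{\phi_i} = \braket{\psi}{\psi} = 1$. Expanding $\proj{\wt\psi}$ and tracing out $\reg{P}$, the hypothesis $\braket{\varphi_i}{\err_j} = 0$ kills every cross term between the $\ket{\varphi}$'s and the $\ket{\err}$'s, leaving
\begin{align*}
    \sigma = \sum_i \abs{\alpha_i}^2 \proj{\phi_i} + E, \qquad \text{where } E := \sum_{i,j} \braket{\err_j}{\err_i}\, \ketbra{\phi_i}{\phi_j},
\end{align*}
so that $\rho - \sigma = \sum_i (1 - \abs{\alpha_i}^2)\proj{\phi_i} - E$. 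The key structural point is that $E$ is positive semidefinite ($E \succeq 0$): indeed $E = \tr_\reg{P}(\proj{\chi})$ for the (subnormalized) vector $\ket{\chi} := \sum_i \ket{\phi_i}_\reg{O}\ket{\err_i}_\reg{P}$, and partial traces preserve positivity.

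To finish, I would invoke the identity $\frac12 \trnorm{M} = \max_{0 \preceq \Pi \preceq I} \tr(\Pi M)$, valid for any Hermitian $M$ with $\tr(M) = 0$; here $M = \rho - \sigma$ is Hermitian and $\tr(\rho - \sigma) = \tr(\proj{\psi}) - \tr(\proj{\wt\psi}) = 1 - 1 = 0$. Fix $\Pi$ with $0 \preceq \Pi \preceq I$. Then $-\tr(\Pi E) \leq 0$ since $\Pi, E \succeq 0$, and any index with $\abs{\alpha_i} > 1$ contributes the nonpositive quantity $(1 - \abs{\alpha_i}^2)\bra{\phi_i}\Pi\ket{\phi_i}$; discarding such indices only increases the bound, so
\begin{align*}
    \tr(\Pi(\rho - \sigma)) \leq \sum_{i \,:\, \abs{\alpha_i} \leq 1} (1 - \abs{\alpha_i}^2)\bra{\phi_i}\Pi\ket{\phi_i} \leq \delta \sum_i \bra{\phi_i}\Pi\ket{\phi_i} \leq \delta \sum_i \braket{\phi_i}{\phi_i} = \delta,
\end{align*}
where the middle step uses $0 \leq 1 - \abs{\alpha_i}^2 \leq \delta$ together with $\bra{\phi_i}\Pi\ket{\phi_i} \geq 0$, and the last step uses $\Pi \preceq I$ and $\braket{\psi}{\psi}=1$. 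Maximizing over $\Pi$ yields $\frac12 \trnorm{\tr_\reg{P}(\proj{\psi} - \proj{\wt\psi})} \leq \delta$, as claimed.

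The only step with real content is the observation that $E \succeq 0$ — recognizing the ``error Gram'' operator $\sum_{i,j}\braket{\err_j}{\err_i}\ketbra{\phi_i}{\phi_j}$ as a partial trace of a pure state; this is the formal counterpart of the slogan that deviations orthogonal to the purification only contribute ``classically.'' Everything else is bookkeeping, the one minor subtlety being that the hypotheses do not rule out $\abs{\alpha_i} > 1$; the argument above handles this automatically, since a positive operator can only help us when estimating the positive part of $\rho - \sigma$. (Alternatively one could bound $\trnorm{E} = \tr(E) \leq \delta$ directly from $\braket{\wt\psi}{\wt\psi} = 1$ and then apply the triangle inequality, but that route needs $\abs{\alpha_i} \leq 1$.)
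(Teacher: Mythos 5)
Your proof is correct, but it takes a different route from the paper's. The paper argues operationally: it measures the purification register with the POVM $\braces{\proj{\varphi_1},\dots,\proj{\varphi_r},\ol{O}}$ (which commutes with the partial trace), applies the triangle inequality over outcomes, and observes that conditioned on outcome $i$ both states collapse to the \emph{same} $\reg{O}$-state $\proj{\phi_i}/\braket{\phi_i}{\phi_i}$ --- so the two reduced states can only differ through the outcome probabilities, which shift by at most $\delta$ in total (plus the $\ol{O}$ branch, which fires with probability at most $\delta$), giving $2\delta$ on the trace norm. You instead compute the reduced states explicitly, isolate the error Gram operator $E=\sum_{i,j}\braket{\err_j}{\err_i}\ketbra{\phi_i}{\phi_j}$, recognize it as $\tr_{\reg{P}}(\proj{\chi})\succeq 0$, and invoke the variational characterization $\frac12\trnorm{M}=\max_{0\preceq\Pi\preceq I}\tr(\Pi M)$ for traceless Hermitian $M$, where the positive term $-\tr(\Pi E)$ can only help. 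The two arguments formalize the same phenomenon (orthogonal deviations in the purification register manifest only as probability leakage), but yours is purely algebraic where the paper's is measurement-theoretic, and yours is slightly more careful on one point the paper glosses over: the hypotheses do not force $\abs{\alpha_i}\leq 1$, and your restriction of the sum to indices with $\abs{\alpha_i}\leq 1$ handles this cleanly, whereas the paper's final line implicitly treats $1-\abs{\alpha_i}^2$ as nonnegative. (In the application this is harmless since the $\alpha_k$ there satisfy $\sum_\ell\abs{\alpha_\ell}^2=1$.) Both routes yield the stated bound of $\delta$ on half the trace norm.
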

\begin{proof}
Consider measuring the purification register of $\ket{\psi}_{\reg{OP}}$ and $\ket{\wt{\psi}}_{\reg{OP}}$ with the POVM $\braces{O_1, \dots, O_r, \ol{O}}$, where $O_i = \proj{\varphi_i}$ for $i \in [r]$ and $\ol{O} = I - \sum_i \proj{\varphi_i}$, and then discarding the measurement outcome.
The resulting state is equal to the state we would get by directly tracing out register $\reg{P}$.
In other words, for any state $\rho_{\reg{OP}}$,
\begin{align*}
    \tr_\reg{P}[\rho_{\reg{OP}}] = \tr_{\reg{P}}\bracks[\big]{(\ol{O})_{\reg{P}} \rho_{\reg{OP}}} + \sum_{i=1}^r \tr_{\reg{P}}\bracks[\big]{(O_i)_{\reg{P}} \rho_{\reg{OP}}}.
\end{align*}
The probability we see the outcome $O_i$ from $\ket{\psi}_{\reg{OP}}$ is $\braket{\phi_i}{\phi_i}$, in which case the resulting state is $\proj{\phi_i} / \braket{\phi_i}{\phi_i}$.
Moreover, these probabilities sum to one, $\sum_i \braket{\phi_i}{\phi_i} = 1$, so the POVM element $\ol{O}$ never appears.

As for $\ket{\wt{\psi}}_{\reg{OP}}$, the probability we see the outcome $\proj{\varphi_i}$ upon measuring is $\abs{\alpha_i}^2\braket{\phi_i}{\phi_i}$, and the resulting state is identical to the outcome from $\ket{\psi}$, $\proj{\phi_i} / \braket{\phi_i}{\phi_i}$.
The POVM element $\ol{O}$ appears with the remaining probability,
\begin{align*}
    1 - \sum_{i=1}^r \abs{\alpha_i}^2 \braket{\phi_i}{\phi_i}
    \leq 1 - \sum_{i=1}^r (1 - \delta) \braket{\phi_i}{\phi_i}
    = \delta.
\end{align*}
In summary, we have that
\begin{align*}
    & \trnorm[\Big]{\tr_\reg{P}(\proj{\psi}) - \tr_{\reg{P}}(\proj{\wt{\psi}})} \\
    &\leq \trnorm[\Big]{\tr_\reg{P}(\ol{O}_\reg{P}\proj{\psi}) - \tr_{\reg{P}}(\ol{O}_{\reg{P}}\proj{\wt{\psi}})}
    + \sum_{i=1}^r \trnorm[\Big]{\tr_\reg{P}((O_{i})_{\reg{P}}\proj{\psi}) - \tr_{\reg{P}}((O_{i})_{\reg{P}}\proj{\wt{\psi}})} \\
    &= \trnorm[\Big]{\tr_{\reg{P}}(\ol{O}_{\reg{P}}\proj{\wt{\psi}})}
    + \sum_{i=1}^r \trnorm[\Big]{\proj{\phi_i} - \abs{\alpha_i}^2\proj{\phi_i}} \\
    &\leq \delta + \sum_{i=1}^r (1 - \abs{\alpha_i}^2)\braket{\phi_i}{\phi_i} \\
    &\leq 2\delta. \qedhere
\end{align*}
\end{proof}

\begin{lemma}[$\eps$-biased Fourier transform] \label{lem:biased-ft}
    For $\eps \in (0, 1/2)$, there is a unitary matrix $F^{(\eps)} \in \mathbb{C}^{q \times q}$ with the following property.
    For all $k \in \braces{0,\dots,q-1}$,
    \begin{align*}
        F^{(\eps)}\parens[\Big]{\sum_{g \in \cyc_q} \sqrt{\mu_\eps(g)} g^k \ket{g}}
        = \alpha_k \ket{k} + \sum_{\ell = 0}^{k - 1} \alpha_\ell \ket{\ell}.
    \end{align*}
    Here, $\alpha_0,\dots,\alpha_k \in \C$ have the additional property that $\sum_{\ell=0}^k \abs{\alpha_\ell}^2 = 1$ and $\abs{\alpha_k}^2 \geq 1 - 4\eps^2$.
\end{lemma}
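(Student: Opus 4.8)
The plan is to build $F^{(\eps)}$ one basis vector at a time via a Gram–Schmidt-style argument, exploiting an upper-triangularity structure. For each $k \in \{0,\dots,q-1\}$, define the target input vector
\[
    \ket{v_k} = \sum_{g \in \cyc_q} \sqrt{\mu_\eps(g)}\, g^k \ket{g}.
\]
First I would check that $\{\ket{v_k}\}_{k=0}^{q-1}$ is a basis of $\C^q$: in the unbiased case $\eps = 0$ these are exactly the (normalized) columns of the discrete Fourier transform, hence orthonormal, and for $\eps \in (0,1/2)$ the Gram matrix is a perturbation of the identity, so the vectors stay linearly independent. (Alternatively one can argue directly: the matrix with entries $\sqrt{\mu_\eps(g)}\,g^k$ is the DFT matrix with its rows rescaled by the nonzero weights $\sqrt{\mu_\eps(g)}$, so it is invertible.) Since $\mu_\eps$ is a real measure, $\braket{v_j}{v_k} = \sum_g \mu_\eps(g)\, \ol{g^j} g^k = \sum_g \mu_\eps(g)\, g^{k-j}$, which is the Fourier coefficient $\wh{\mu_\eps}(k-j)$; in particular $\braket{v_k}{v_k} = \wh{\mu_\eps}(0) = 1$, so each $\ket{v_k}$ is already a unit vector.

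Next I would construct the orthonormal system by Gram–Schmidt \emph{in the order $k = 0, 1, \dots, q-1$}, producing vectors $\ket{w_0}, \ket{w_1}, \dots$ with $\mathrm{span}\{\ket{w_0},\dots,\ket{w_k}\} = \mathrm{span}\{\ket{v_0},\dots,\ket{v_k}\}$ for each $k$. Then I \emph{define} $F^{(\eps)}$ to be the unitary sending $\ket{w_k} \mapsto \ket{k}$ for all $k$. Because $\ket{v_k} \in \mathrm{span}\{\ket{w_0},\dots,\ket{w_k}\}$, we get $F^{(\eps)}\ket{v_k} = \sum_{\ell=0}^{k}\alpha_\ell \ket{\ell}$ with $\sum_{\ell=0}^k |\alpha_\ell|^2 = \|\ket{v_k}\|^2 = 1$, which is exactly the stated upper-triangular form with the normalization condition.

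The remaining and only substantive point is the bound $|\alpha_k|^2 \ge 1 - 4\eps^2$. Here $\alpha_k = \braket{w_k}{v_k}$ is the component of $\ket{v_k}$ orthogonal to $\mathrm{span}\{\ket{v_0},\dots,\ket{v_{k-1}}\}$, so $1 - |\alpha_k|^2 = \|\Pi_{k-1}\ket{v_k}\|^2$ where $\Pi_{k-1}$ is the orthogonal projector onto that span. I would bound this by comparing to the unbiased case: write $\sqrt{\mu_\eps(g)} = \sqrt{(1-\eps)\mu_0(g) + \eps\mu_1(g)}$ and split $\ket{v_k} = \sqrt{1-\eps}\,\ket{v_k^{(0)}} + \ket{\Delta_k}$ where $\ket{v_k^{(0)}}$ uses the unbiased weights (so the $\ket{v_k^{(0)}}$ are orthonormal, being DFT columns) and $\ket{\Delta_k}$ is the correction. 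Since $\ket{v_k^{(0)}}$ is orthogonal to $\mathrm{span}\{\ket{v_0^{(0)}},\dots,\ket{v_{k-1}^{(0)}}\} = \mathrm{span}\{\ket{v_0},\dots,\ket{v_{k-1}}\} + (\text{corrections})$... — actually cleaner: estimate $\|\ket{\Delta_k}\|$ directly, using $|\sqrt{\mu_\eps(g)} - \sqrt{\mu_0(g)}| \le$ (something $O(\eps\sqrt{\mu_0(g)})$ by concavity of $\sqrt{\cdot}$, since $\mu_\eps(g)/\mu_0(g) \in [1-\eps, 1+O(\eps)]$), summing to get $\|\ket v_k - \ket{v_k^{(0)}}\|^2 = \sum_g |\sqrt{\mu_\eps(g)}-\sqrt{\mu_0(g)}|^2 \le 4\eps^2$ after the constants are tracked; then $\|\Pi_{k-1}\ket{v_k}\|^2 = \|\Pi_{k-1}(\ket{v_k}-\ket{v_k^{(0)}})\|^2 \le \|\ket{v_k}-\ket{v_k^{(0)}}\|^2 \le 4\eps^2$, using that $\Pi_{k-1}\ket{v_k^{(0)}}=0$. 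Getting the constant to be exactly $4$ (rather than some larger constant) is the one place requiring care: I expect to need the explicit form of $\mu_\eps$ from its definition, namely that the per-element multiplicative deviation $|\mu_\eps(g)/\mu_0(g) - 1|$ is at most roughly $2\eps$ on the right-half arc (the factor $\frac{q}{2\lfloor q/4\rfloor + 1} - 1 \to 1$) and exactly $\eps$ off it, combined with the bound $|\sqrt{1+x}-1| \le |x|/2$ for $x \ge -1$ — hence $\sum_g(\sqrt{\mu_\eps(g)}-\sqrt{\mu_0(g)})^2 \le \sum_g \mu_0(g)\,(x_g/2)^2 \le \tfrac14 \max_g |x_g|^2 \cdot \sum_g \mu_0(g) \cdot$ (a small correction) $\le 4\eps^2$ once $|x_g| \le 4\eps$ is verified. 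This perturbation estimate, and pinning its constant, is the main obstacle; the rest is bookkeeping.
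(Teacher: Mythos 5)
Your construction of $F^{(\eps)}$---Gram--Schmidt on the vectors $\ket{v_k}$ in the order $k=0,1,\dots,q-1$, then mapping the resulting orthonormal basis to the computational basis---is exactly the paper's, and your derivation of the triangular form and of $\sum_{\ell=0}^k|\alpha_\ell|^2=1$ is fine. The gap is in the one substantive step, the bound $|\alpha_k|^2\ge 1-4\eps^2$. You correctly reduce it to bounding $\|\Pi_{k-1}\ket{v_k}\|^2$, where $\Pi_{k-1}$ projects onto $\mathrm{span}\{\ket{v_0},\dots,\ket{v_{k-1}}\}$, but then assert $\Pi_{k-1}\ket{v_k^{(0)}}=0$ so that $\|\Pi_{k-1}\ket{v_k}\|=\|\Pi_{k-1}(\ket{v_k}-\ket{v_k^{(0)}})\|\le\|\ket{v_k}-\ket{v_k^{(0)}}\|$. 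That assertion is false: $\ket{v_k^{(0)}}$ is orthogonal to the span of the \emph{unbiased} vectors $\ket{v_0^{(0)}},\dots,\ket{v_{k-1}^{(0)}}$, not to the span of the biased vectors $\ket{v_0},\dots,\ket{v_{k-1}}$ that defines $\Pi_{k-1}$. Indeed, for $j<k$ one has $\braket{v_j}{v_k^{(0)}}=\bigl(\bra{v_j}-\bra{v_j^{(0)}}\bigr)\ket{v_k^{(0)}}$, which is generically nonzero of order $\eps$; and since $\Pi_{k-1}$ can have rank up to $q-1$, a per-vector perturbation bound of order $\eps$ does not control $\|\Pi_{k-1}\ket{v_k^{(0)}}\|$---naive accumulation over the $k$ spanning vectors gives $O(k\eps^2)$, which is useless for $k$ growing with $q$.

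The two ingredients actually needed (and supplied in the paper) are: (i) a way to convert the projection onto the span of a \emph{non-orthogonal} family into a sum of squared overlaps, namely $\bra{v}\Pi_{k-1}\ket{v}\le\frac{1}{1-\eps}\sum_{j<k}|\braket{v_j}{v}|^2$, which holds because the matrix whose columns are the $\ket{v_j}$ is a DFT matrix with rows rescaled by $\sqrt{q\,\mu_\eps(g)}\ge\sqrt{1-\eps}$ and hence has all singular values at least $\sqrt{1-\eps}$; and (ii) the estimate $\sum_{j\ne k}|\braket{v_j}{v_k}|^2=\sum_{i=1}^{q-1}\bigl|\E_{g\sim\mu_\eps}[g^i]\bigr|^2=\eps^2\bigl(q\Pr_{g,h\sim\mu_1}[g=h]-1\bigr)\le 2\eps^2$, which uses that \emph{all} the nonzero Fourier mass of $\mu_\eps$ comes from the $\eps\mu_1$ component and is controlled by a collision probability---not merely that $\mu_\eps$ is entrywise $\eps$-close to $\mu_0$. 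Your estimate $\sum_g(\sqrt{\mu_\eps(g)}-\sqrt{\mu_0(g)})^2=O(\eps^2)$ is a correct bound on $\|\ket{v_k}-\ket{v_k^{(0)}}\|^2$, but it cannot substitute for these two steps, and even a repaired version of your comparison argument would need both of them (at which point it reproduces the paper's proof, with a worse constant than the required $4\eps^2$).
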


\begin{proof}[Proof of \cref{thm:lower-ae}]
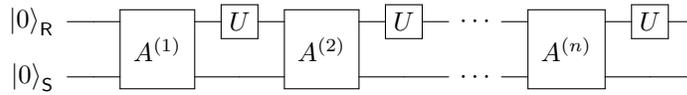
\begin{figure}[ht]
\[ 
\Qcircuit @R=1em @C=1em {
\lstick{\ket{0}_{\reg{R}}}
    & \qw
    & \multigate{1}{A^{(1)}}
    & \gate{U}
    & \multigate{1}{A^{(2)}}
    & \gate{U}
    & \qw
    & {\cdots}
    &
    & \multigate{1}{A^{(n)}}
    & \gate{U}
    & \qw  \\
\lstick{\ket{0}_{\reg{S}}}
    & \qw
    & \ghost{A^{(1)}}
    & \qw
    & \ghost{A^{(2)}}
    & \qw
    & \qw
    & {\cdots}
    &
    & \ghost{A^{(n)}}
    & \qw
    & \qw  
}
\]
\caption{
    The form of an arbitrary distinguishing algorithm which queries $U$ $n$ times to solve \cref{prob:ae-distinguish}.
}
\label{fig:alg}
\end{figure}

We consider an arbitrary distinguishing algorithm which applies the given unitary $U$ $n$ times.
Without loss of generality, we can write such a computation in the following way: prepare the pure state according to
\begin{align*}
    \ket{\alg_U}_\reg{RS} \coloneqq U_\reg{R} A^{(n)}_{\reg{RS}} \dots U_{\reg{R}} A^{(2)}_{\reg{RS}} U_{\reg{R}} A^{(1)}_{\reg{RS}} \ket{0}_{\reg{RS}},
\end{align*}
and then perform a binary measurement; if the output is 0, conclude that $U$ came from ensemble $0$, and when the output is 1, conclude that $U$ came from ensemble $1$.
See \cref{fig:alg} for the circuit diagram.

So, by the Holevo--Helstrom theorem~\cite[Theorem 3.4]{Wat18}, the success of the algorithm can be upper bounded by
\begin{align}
    \Pr[\alg \text{ succeeds}] \leq \frac12 + \frac14\trnorm[\Big]{
        \E_{U \sim \text{Ensemble }0}\bracks[\big]{\proj{\alg_U}}
        - \E_{U \sim \text{Ensemble }1}\bracks[\big]{\proj{\alg_U}}
    }. \label{eq:success}
\end{align}
Recall that both ensembles are over diagonal unitaries $U$, and their diagonal entries $U_i$ are $q$-th roots of unity.
So, we can describe a unitary from the ensemble as a list of group elements of $\cyc_q$.
Let $p^{(0)}(U)$ and $p^{(1)}(U)$ be the probability that the unitary $U$ is chosen from Ensembles 0 and 1, respectively.
Then, using the language of purification, notice that we can define
\begin{align*}
    \ket{\alg^{(0)}}_{\reg{PRS}} &\coloneqq \sum_{U} \sqrt{p^{(0)}(U)} \ket{\alg_U}_{\reg{RS}}\ket{U}_{\reg{P}}, \\
    \ket{\alg^{(1)}}_{\reg{PRS}} &\coloneqq \sum_{U} \sqrt{p^{(1)}(U)} \ket{\alg_U}_{\reg{RS}}\ket{U}_{\reg{P}},
\end{align*}
where $\ket{U} = \ket{U_1} \otimes \dots \otimes \ket{U_d}$. In this way, we have that
\begin{align} \label{eq:purifying-ensemble}
    \trnorm[\Big]{\E_{U \sim \text{Ensemble }0}\bracks[\big]{\proj{\alg_U}} - \E_{U \sim \text{Ensemble }1}\bracks[\big]{\proj{\alg_U}}}
    = \trnorm[\Big]{\tr_\reg{P}[\proj{\alg^{(0)}}] - \tr_\reg{P}[\proj{\alg^{(1)}}]}.
\end{align}

We begin by writing the output state as a sum over Feynman paths.
That is, we write
\begin{align*}
    U = \sum_{x \in [d]} U_x \proj{x}.
\end{align*}
Then, we can decompose 
\begin{align*}
    \ket{\alg_U}_{\reg{RS}}
    &= U_\reg{R} A^{(n)}_{\reg{RS}} \dots U_{\reg{R}} A^{(2)}_{\reg{RS}} U_{\reg{R}} A^{(1)}_{\reg{RS}} \ket{0}_{\reg{RS}} \\
    &= \sum_{x_n,\dots, x_1 \in [d]} U_{x_n} \proj{x_n} A^{(n)} \dots U_{x_2} \proj{x_2} A^{(2)} U_{x_1} \proj{x_1} A^{(1)} \ket{0} \\
    &= \sum_{x_n,\dots, x_1 \in [d]} U_{x_n} \dots U_{x_1} \underbrace{\proj{x_n} A^{(n)} \dots \proj{x_2} A^{(2)} \proj{x_1} A^{(1)} \ket{0}}_{\coloneqq \ket{\feyn_{x_n,\dots,x_1}}} \\
    &= \sum_{x_n,\dots, x_1 \in [d]} U_{x_n} \dots U_{x_1} \ket{\feyn_{x_n,\dots,x_1}}. 
\end{align*}
Notice that $\ket{\feyn_{x_n,\dots,x_1}}$ does not depend on $U$; all that we used was that $U$ is diagonal.

To describe the output state, randomized over $U$ pulled from an ensemble, we consider its purification.
Below, we consider Ensemble 1.
The derivation for Ensemble 0 is identical, except with $\eps$ set to zero:
\begin{align*}
    & \qquad \sum_U \sqrt{p^{(1)}(U)} \ket{\alg_U}_{\reg{RS}} \ket{U}_{\reg{P}} \\
    &= \sum_U \sqrt{p^{(1)}(U)} \sum_{x_n,\dots, x_1 \in [d]} U_{x_n} \dots U_{x_1} \ket{\feyn_{x_n,\dots,x_1}}_{\reg{RS}} \ket{U}_{\reg{P}} \\
    &= \sum_{x_n,\dots, x_1 \in [d]} \ket{\feyn_{x_n,\dots,x_1}}_{\reg{RS}} \sum_U \sqrt{p^{(1)}(U)} U_{x_n} \dots U_{x_1} \ket{U}_{\reg{P}}.\\
\intertext{Now, we expand the purification register.
The entries of $U$, $U_x$, are sampled independently from $\mu_\eps$.
}
    &= \sum_{x_n,\dots, x_1 \in [d]} \ket{\feyn_{x_n,\dots,x_1}}_{\reg{RS}} \otimes \parens[\Big]{
        \bigotimes_{i=1}^{d} \parens[\big]{\sum_{g \in \cyc_q} \sqrt{\mu_\eps(g)} g^{x(i)} \ket{g}}
    }_{\reg{P}}.\\
\intertext{
Above, we use $x(i)$ to denote the number of times that $i$ appears in the multiset $\braces{x_n,\ldots,x_1}$.
At this point, the purification register only depends on the histogram of $x_n,\dots,x_1$: in particular, let $h_i = x(i)$ for $i \in [d]$.
So, we can regroup the sum to be in terms of $h_1, \dots, h_d$, and define $\ket{\feyn_{h_1,\dots,h_d}}$ to be equal to the sum of $\ket{\feyn_{x_n,\dots,x_1}}$ over all $x$'s whose statistics are the same as $h$, meaning that $x(i) = h_i$ for all $i \in [d]$.
Notice that $h_1 + \dots + h_d = n$ always.
}
    &= \sum_{\substack{h_1,\dots,h_d \geq 0 \\ h_1 + \dots + h_d = n}} \ket{\feyn_{h_1,\dots,h_d}}_{\reg{RS}} \otimes \parens[\Big]{
        \bigotimes_{i=1}^{d} \parens[\big]{\sum_{g \in \cyc_q} \sqrt{\mu_\eps(g)} g^{h_i} \ket{g}}
    }_{\reg{P}}. \\
\intertext{
Then, we apply a unitary to the purification register.
Crucially, this does not change the state attained by tracing out the purification register.
We apply a biased Fourier transform, using \cref{lem:biased-ft} to get the final form of the purification.
To use \cref{lem:biased-ft}, we need that $h_1, \dots, h_d \leq n \leq q$.
}
    &\xrightarrow[]{(F^{(\eps)})^{\otimes d}} \sum_{\substack{h_1,\dots,h_d \geq 0 \\ h_1 + \dots + h_d = n}} \ket{\feyn_{h_1,\dots,h_d}}_{\reg{RS}} \otimes \parens[\Big]{
        \bigotimes_{i=1}^{d} \parens[\big]{\alpha_{h_i}^{(i)} \ket{h_i}+ \sum_{k = 0}^{h_i - 1} \alpha_k^{(i)} \ket{k}}
    }_{\reg{P}}.
\end{align*}
Call this final state $\ket{\widetilde{\alg}^{(1)}}_{\reg{PRS}}$.
By \cref{lem:biased-ft}, for all $i$, the $\alpha^{(i)}$'s have an $\ell^2$ norm of one and $\abs{\alpha_{h_i}^{(i)}}^2 \geq 1 - 4\eps^2$.
Further, when $h_i = 0$, then $\abs{\alpha_{h_i}^{(i)}} = 1$ by the aforementioned properties.
So, the coefficient associated to $\bigotimes_{i=1}^{d} \ket{h_i}$, $\alpha_{h_{1}}^{(1)} \dots \alpha_{h_d}^{(d)}$, has size at least
\begin{align} \label{eq:n-eps-squared}
    \prod_{i=1}^{d} \abs{\alpha_{h_i}^{(i)}}^2 \geq (1 - 4\eps^2)^n \geq 1 - 4n\eps^2.
\end{align}
Consequently, if we perform the same series of operations on the unbiased ensemble (essentially setting $\eps = 0$), we get the purification
\begin{align*}
    \ket{\widetilde{\alg}^{(0)}} \coloneqq \sum_{\substack{h_1,\dots,h_d \geq 0 \\ h_1 + \dots + h_d = n}} \ket{\feyn_{h_1,\dots,h_d}}_{\reg{RS}} \otimes \parens[\Big]{
        \bigotimes_{i=1}^{d} \ket{h_i}
    }_{\reg{P}}.
\end{align*}
Now, we return to \eqref{eq:success}:
\begin{align*}
    \Pr[\alg \text{ succeeds}]
    &\leq \frac12 + \frac14\trnorm[\Big]{
        \E_{U \sim \text{Ensemble }0}\bracks[\big]{\proj{\alg_U}}
        - \E_{U \sim \text{Ensemble }1}\bracks[\big]{\proj{\alg_U}}
    } \\
    &= \frac12 + \frac14\trnorm[\Big]{\tr_\reg{P}[\proj{\alg^{(0)}}] - \tr_\reg{P}[\proj{\alg^{(1)}}]} \tag*{by \eqref{eq:purifying-ensemble}}\\
    &= \frac12 + \frac14\trnorm[\Big]{\tr_\reg{P}[\proj{\widetilde{\alg}^{(0)}}] - \tr_\reg{P}[\proj{\widetilde{\alg}^{(1)}}]} \\
    &\leq \frac12 + \frac12(4n\eps^2). \tag*{by \cref{lem:eps-squared} and \eqref{eq:n-eps-squared}}
\end{align*}
In the final step, we apply \cref{lem:eps-squared} with $\ket{\psi} \gets \ket{\widetilde{\alg}^{(0)}}$ and $\ket{\widetilde{\psi}} \gets \ket{\widetilde{\alg}^{(1)}}$.
In particular, we have that the $\ket{\phi_i}$'s are $\ket{\feyn_{h_1,\dots,h_d}}$ and the $\ket{\varphi_i}$'s are $\bigotimes_{i=1}^{d} \ket{h_i}$.
The main assumption of the lemma to check is that the error in $\ket{\widetilde{\psi}}$ is orthogonal to the $\ket{\varphi_i}$'s.
This is true because the error is over computational basis vectors $\ket{y_1} \dots \ket{y_d}$ which sum to smaller than $n$, $y_1 + \dots + y_d < n$, whereas we know that $h_1 + \dots + h_d = n$.

Returning to the equation, (provided that $q \geq n$), the algorithm can succeed with probability $\geq \frac23$ only when $n\eps^2 \geq \frac{1}{12}$.
So, because $q \geq 1/\eps^2$ by assumption, this completes the proof.
\end{proof}

\begin{remark}[What is different with access to $U^\dagger$?]
When an algorithm gets the ability to apply both $U$ and $U^\dagger$, we can analyze the success probability of the algorithm in an identical fashion.
However, we do not get the $n = \bigOmega{1/\eps^2}$ lower bound.
Just looking at Ensemble 0, the purification proceeds identically, with the only change being that some of the $U_x$'s are replaced by $U_x^*$: these inverse queries \emph{decrease} the index in the purification instead of increase.
As a result, the $\ket{h_1}\dots\ket{h_d}$ which appear in the purification need not sum to $n$; and so the difference between Ensembles 0 and 1 are not orthogonal in the manner which allows us to apply \cref{lem:eps-squared}.
In other words, the difference between Ensembles 0 and 1 cannot be captured by probability mass, and so it scales with $\eps$, not $\eps^2$.
\end{remark}

\begin{remark}[Proving the lower bound against $\controlled U$] \label{rmk:ctrl}
    The lower bounds of \cref{thm:lower-ae} and \cref{thm:lower-aa} both continue to hold if we get access to the controlled unitary $\controlled U$, with one modification to the ensembles.
    By \cite[Theorem 1.1]{tw25b}, if $\ket{\alg_{\controlled U}}$ is the output of a circuit which applies $\controlled U$ $n$ times for $n \leq q$, then there is a circuit which applies $U$ $n$ times and outputs $\ket{\alg_{\controlled (\varphi U)}}$ for a random $\varphi \in \cyc_q$.
    When $U$ is drawn from an ensemble which is invariant under multiplication by $\varphi \in \cyc_q$, then the output (when treated as a mixed state over the randomness of $U$) is identical to the output of the original ensemble.
    \begin{align*}
        \E_{U}\bracks[\big]{\proj{\alg_{\controlled U}}}
        = \E_{U}\bracks[\big]{\E_{\varphi \in \cyc_q}\bracks[\big]{\proj{\alg_{\controlled (\varphi U)}}}}
    \end{align*}
    So, it suffices to modify the ensembles to make them invariant to global phase: we change Ensemble 0 and 1 of \cref{def:ae-ensembles} to be $\varphi U$, where $U$ is drawn from the original definition of Ensemble 0 and 1, and $\varphi$ is sampled from $\mu_0$.
    Because this global phase does not affect circuits which use uncontrolled queries, these new ensembles can be distinguished using amplitude estimation, by an identical argument.
    With this random global phase, the ensembles also become hard to distinguish given controlled queries.
\end{remark}

\subsection{Proof of \texorpdfstring{\cref{lem:biased-ft}}{Lemma 3.3}}

For each $k \in \{0, \ldots, q-1\}$, define the vectors
\begin{equation*}
    \ket{F_k} \coloneqq \frac{1}{\sqrt{q}} \sum_{g \in \cyc_q} g^k \ket{g},
    \quad\text{and}\quad
    \ket{\widetilde{F}_k} \coloneqq \sum_{g \in \cyc_q} \sqrt{\mu_{\epsilon}(g)} g^k \ket{g}.
\end{equation*}
The vectors $\ket{F_0}, \ldots, \ket{F_{q-1}}$ form the usual Fourier basis,
whereas the vectors $\ket{\widetilde{F}_0}, \ldots, \ket{\widetilde{F}_{q-1}}$ can be viewed as a biased version of the Fourier basis.
These vectors are unit vectors, because
\begin{equation*}
\norm{\ket{\widetilde{F}_k}}^2
= \braket{\widetilde{F}_k}{\widetilde{F}_k}
= \Big(\sum_{g \in \cyc_q} \sqrt{\mu_\eps(g)} g^{-k} \bra{g}\Big) \Big(\sum_{g \in \cyc_q} \sqrt{\mu_\eps(g)} g^k \ket{g}\Big)
= \sum_{g \in \cyc_q} \mu_\eps(g)
= 1.
\end{equation*}
However, they are \emph{not} orthogonal,
which means that there can be no unitary which exactly maps $\ket{\widetilde{F}_k}$ to $\ket{k}$ for all $k$.
\Cref{lem:biased-ft} states that there is still a unitary which approximately performs this mapping.
To show this, we will first show that although these vectors are not exactly orthogonal, they are still \emph{approximately} orthogonal.
Define the matrix
\begin{equation}
    \widetilde{F} \coloneqq \sum_{k \in \Z_q} \ketbra{\widetilde{F}_k}{k}
    = \begin{bmatrix}
        \vert & \vert & \cdots & \vert \\
        \ket{\widetilde{F}_0} & \ket{\widetilde{F}_1} & \cdots & \ket{\widetilde{F}_{q-1}} \\
        \vert & \vert & \cdots & \vert
    \end{bmatrix}. \label{eq:fake-SVD}
\end{equation}
If the $\ket{\widetilde{F}_k}$ vectors were orthogonal, then \Cref{eq:fake-SVD} would give the singular value decomposition of $\widetilde{F}$, and therefore all of $\widetilde{F}$'s singular values would be equal to 1.
This is of course not true,
but the next lemma nevertheless shows that $\widetilde{F}$'s singular values are close to 1.
\begin{lemma}[Singular values of $\widetilde{F}$]\label{lem:ms-svd}
    $\sqrt{1- \epsilon} \leq \sigma_{\min}(\widetilde{F}) \leq \sigma_{\max}(\widetilde{F}) \leq \sqrt{1 + 2\epsilon}$.
\end{lemma}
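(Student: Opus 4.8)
The statement to prove is \Cref{lem:ms-svd}, bounding the singular values of $\widetilde{F}$. Since $\sigma_{\min}(\widetilde{F})^2$ and $\sigma_{\max}(\widetilde{F})^2$ are the extreme eigenvalues of $\widetilde{F}^\dagger \widetilde{F}$, the plan is to compute the Gram matrix $G \coloneqq \widetilde{F}^\dagger \widetilde{F}$ explicitly, whose $(k,\ell)$ entry is $\braket{\widetilde{F}_k}{\widetilde{F}_\ell} = \sum_{g \in \cyc_q} \mu_\eps(g) g^{\ell - k}$. Writing $\mu_\eps = (1-\eps)\mu_0 + \eps \mu_1$, the $\mu_0$ part contributes $(1-\eps)\,[k=\ell]$ since $\sum_{g}\mu_0(g)g^{m} = [m \equiv 0]$, and the $\mu_1$ part contributes $\eps \, c_{\ell - k}$ where $c_m \coloneqq \E_{g\sim\mu_1}[g^m] = \frac{1}{2\floor{q/4}+1}\sum_{j=-\floor{q/4}}^{\floor{q/4}} \omega^{jm}$ is a real Dirichlet-kernel value in $[-1,1]$ (indeed $c_0 = 1$). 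So $G = (1-\eps) I + \eps C$ where $C$ is the Hermitian (in fact real symmetric, and circulant) matrix with entries $C_{k\ell} = c_{\ell-k}$.

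**Key steps.** First I would record $G = (1-\eps)I + \eps C$ from the computation above. Then I need $\|C\|_{\mathrm{op}} \le 1$, which is the crux. Since $C$ is circulant with symbol $c_m$, its eigenvalues are the discrete Fourier transform values $\widehat{c}(t) = \sum_{m \in \Z_q} c_m \omega^{-mt}$ for $t \in \Z_q$; but more transparently, $C$ is itself (up to the $\mu_1$ normalization) a Gram matrix: $C = R^\dagger R$ where $R = \frac{1}{\sqrt{2\floor{q/4}+1}}\sum_{k}\big(\sum_{j=-\floor{q/4}}^{\floor{q/4}} \omega^{jk}\ket{j}\big)\bra{k}$ has columns that are (scaled) characters restricted to the index set $S = \{-\floor{q/4},\dots,\floor{q/4}\}$. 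Hence $C \succeq 0$, and $\|C\|_{\mathrm{op}} = \|R\|_{\mathrm{op}}^2 = \|RR^\dagger\|_{\mathrm{op}}$; but $RR^\dagger = \frac{1}{2\floor{q/4}+1}\sum_{j,j' \in S}\big(\sum_k \omega^{(j-j')k}\big)\ket{j}\bra{j'} = \frac{q}{2\floor{q/4}+1} \cdot \frac{1}{q}\sum_k\dots$, and since $\sum_{k\in\Z_q}\omega^{(j-j')k} = q[j=j']$, we get $RR^\dagger = \frac{q}{2\floor{q/4}+1} I_S$, so $\|C\|_{\mathrm{op}} = \frac{q}{2\floor{q/4}+1} \le 2$ for large $q$ (the ratio tends to $4/q \cdot q$... wait, $\frac{q}{2\floor{q/4}+1} \to 2$). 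Then from $G = (1-\eps)I + \eps C$ with $0 \preceq C$ and $\|C\| \le 2$, I get $(1-\eps) I \preceq G \preceq (1 - \eps + 2\eps) I = (1+\eps)I$, which is even slightly stronger than the claimed $\sqrt{1+2\eps}$ upper bound; taking square roots of the eigenvalue bounds gives $\sqrt{1-\eps} \le \sigma_{\min}(\widetilde{F}) \le \sigma_{\max}(\widetilde{F}) \le \sqrt{1+\eps} \le \sqrt{1+2\eps}$.

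**Alternative for the upper bound, if one wants to avoid the exact circulant computation.** One can bound $\|C\|_{\mathrm{op}} \le \|C\|_{1\to 1}$ crudely via $\max_k \sum_\ell |c_{\ell-k}| = \sum_{m} |c_m|$, but this sum of Dirichlet-kernel values is not obviously $O(1)$, so this route is worse; I would stick with the Gram-matrix argument, which is clean and gives the exact operator norm $\frac{q}{2\floor{q/4}+1}$. A second clean route: observe directly that $\widetilde{F} = D^{1/2} \Phi$ where $\Phi$ has $\ket{F_k}$-type columns is false because of the $\mu_1$ support restriction — so the Gram route is the natural one.

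**Main obstacle.** The one step requiring care is establishing $\|C\|_{\mathrm{op}} \le 2$ (equivalently computing $\E_{g\sim\mu_1}[g^m]$ and controlling the resulting matrix). The Gram-factorization observation makes this essentially free — $C = R^\dagger R$ with $RR^\dagger$ a multiple of the identity — but one must be careful that $\mu_1$ is the uniform distribution on the \emph{arc} $S = \{\omega^{-\floor{q/4}},\dots,\omega^{\floor{q/4}}\}$, so the relevant characters are indexed by $S$ and the normalizing constant is $2\floor{q/4}+1$, which for large $q$ satisfies $1 \le \frac{q}{2\floor{q/4}+1} \le 2$; I would state the "sufficiently large $q$" hypothesis (inherited from the ambient setup) precisely enough that this inequality holds. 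Everything else — computing the Gram matrix entries, splitting via $\mu_\eps = (1-\eps)\mu_0 + \eps\mu_1$, and taking square roots at the end — is routine.
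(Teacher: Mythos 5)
Your proof is correct in substance, but it takes a genuinely different route from the paper's. You pass to the Gram matrix $\widetilde{F}^\dagger\widetilde{F} = (1-\eps)I + \eps C$ and bound the spectrum of the PSD circulant $C$ via a second Gram factorization $C = R^\dagger R$ with $RR^\dagger = \frac{q}{2\floor{q/4}+1} I_S$. The paper instead writes down the singular value decomposition of $\widetilde{F}$ itself in one line: grouping by rows rather than columns, $\widetilde{F} = \sum_{g \in \cyc_q} \sqrt{q\,\mu_\eps(g)}\,\ketbra{g}{F_{g^{-1}}}$, so the singular values are exactly $\sqrt{q\,\mu_\eps(g)}$ and the bounds follow from $(1-\eps)/q \le \mu_\eps(g) \le (1+2\eps)/q$. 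Amusingly, this is precisely the factorization $\widetilde{F} = D^{1/2}\cdot(\text{unitary Fourier matrix})$ that you dismiss as ``false because of the $\mu_1$ support restriction'' in your alternative paragraph --- it is valid because the $\mu_\eps$ weights sit row-wise, not column-wise, so the support restriction is irrelevant; your two routes compute the same spectrum, just from the two sides of the SVD. One quantitative slip in yours: $\norm{C}_{\mathrm{op}} = \frac{q}{2\floor{q/4}+1}$ is \emph{not} always $\le 2$ (for $q \equiv 3 \pmod 4$ it equals $2 + \frac{2}{q-1}$, e.g.\ $7/3$ at $q=7$, and it is $3$ at $q=3$), and the lemma carries no ``sufficiently large $q$'' hypothesis. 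This is harmless: $\frac{q}{2\floor{q/4}+1} \le 3$ always, which gives $\sigma_{\max}^2 \le 1 - \eps + 3\eps = 1+2\eps$, exactly the claimed bound --- but you should use the constant $3$ rather than $2$ and drop the asymptotic caveat.
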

\begin{proof}
We directly calculate the singular value decomposition of $\widetilde{F}$ as follows:
\begin{align*}
    \widetilde{F}
    = \sum_{k \in \Z_q} \ketbra{\widetilde{F}_k}{k}
    &= \sum_{k \in \Z_q} \Big(\sum_{g \in \cyc_q} \sqrt{\mu_{\epsilon}(g)} g^k \ket{g}\Big)\bra{k}\\
    &= \sum_{g \in \cyc_q} \sqrt{q\cdot\mu_{\epsilon}(g)}\ket{g} \Big( \frac{1}{\sqrt{q}}\sum_{k \in \Z_q} g^k \bra{k}\Big)
    = \sum_{g \in \cyc_q} \sqrt{q\cdot \mu_{\epsilon}(g)}\cdot \ketbra{g}{F_{g^{-1}}}.
\end{align*}
This implies that the singular values of $\widetilde{F}$ are $\sqrt{q \cdot \mu_{\epsilon}(g)}$ for $g \in \cyc_q$;
as $(1 - \eps)/q \leq \mu_{\epsilon}(g) \leq (1 + 2\epsilon)/q$ for all $g$,
this means that the singular values of $\widetilde{F}$ are in the interval $(\sqrt{1 - \epsilon}, \sqrt{1+2\eps})$.
\end{proof}

Now we will perform the Gram-Schmidt process on the columns of $\widetilde{F}$ to ``round'' them into an orthonormal set of vectors. For each $0 \leq k \leq q-1$, write
\begin{equation*}
    \Pi_k = \text{the projector onto }\mathrm{span}\{\ket{\widetilde{F}_0}, \ldots, \ket{\widetilde{F}_k}\}.
\end{equation*}
The Gram-Schmidt process computes the set of unnormalized vectors $\ket{G_0}, \ldots, \ket{G_{q-1}}$ given by
\begin{equation*}
    \ket{G_k}
    \coloneqq \overline{\Pi}_{k-1} \cdot \ket{\widetilde{F}_k}
    = \ket{\widetilde{F}_k} - \Pi_{k-1} \cdot \ket{\widetilde{F}_k}.
\end{equation*}
It outputs the normalized vectors
$\ket{\overline{G}_0}, \ldots, \ket{\overline{G}_{q-1}}$
in which $\ket{\overline{G}_k} = \ket{G_k} / \norm{\ket{G_k}}$ for all $k \in \{0, \ldots, q-1\}$.
These vectors form an orthonormal basis,
because if $j < k$, then $\ket{\overline{G}_j}$ is contained in $\Pi_{k-1}$ and $\ket{\overline{G}_k}$ is orthogonal to $\Pi_{k-1}$.
We will aim to show that each $\ket{G_k}$ has length very close to 1,
and hence must be close to $\ket{\widetilde{F}_k}$.
Let us fix some $k \in \{0, \ldots, q-1\}$. 
Then
\begin{equation}\label{eq:length-of-G}
    \norm{\ket{G_k}}^2
    = \braket{G_k}{G_k}
    = \bra{\widetilde{F}_k} \cdot \overline{\Pi}_{k-1} \cdot \ket{\widetilde{F}_k}
    = \braket{\widetilde{F}_k}{\widetilde{F}_k}
     - \bra{\widetilde{F}_k} \cdot \Pi_{k-1} \cdot \ket{\widetilde{F}_k}
     = 1
     - \bra{\widetilde{F}_k} \cdot \Pi_{k-1} \cdot \ket{\widetilde{F}_k}.
\end{equation}
Our goal is therefore to upper-bound
$\bra{\widetilde{F}_k} \cdot \Pi_{k-1} \cdot \ket{\widetilde{F}_k}$. The next lemma shows
our first step in upper bounding this quantity.
\begin{lemma}\label{lem:cant-think-of-a-good-name}
    For any vector $\ket{v} \in \C^q$,
    \begin{equation*}
        \bra{v} \cdot \Pi_{k-1} \cdot \ket{v}
        \leq \frac{1}{1 - \epsilon} \cdot \sum_{j=0}^{k-1} |\braket{\widetilde{F}_j}{v}|^2.
    \end{equation*}
\end{lemma}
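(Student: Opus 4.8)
The plan is to exploit that $\Pi_{k-1}$ is a self-adjoint idempotent, so $\bra{v}\Pi_{k-1}\ket{v} = \norm{\Pi_{k-1}\ket{v}}^2$, and that $\Pi_{k-1}\ket{v}$ lives in the span of $\ket{\widetilde{F}_0},\dots,\ket{\widetilde{F}_{k-1}}$. The whole inequality then reduces to comparing $\norm{\Pi_{k-1}\ket{v}}^2$ with $\sum_{j=0}^{k-1}\abs{\braket{\widetilde F_j}{v}}^2$, with the loss $\tfrac{1}{1-\epsilon}$ controlled by the least singular value of the sub-collection $\{\ket{\widetilde F_j}\}_{j<k}$, which \cref{lem:ms-svd} bounds below by $\sqrt{1-\epsilon}$.

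First I would record the submatrix estimate. Set $\widetilde F_{<k} \coloneqq \sum_{j=0}^{k-1}\ketbra{\widetilde F_j}{j}$, i.e.\ $\widetilde F$ with all but its first $k$ columns zeroed out. For any $\ket{c}$ supported on coordinates $\{0,\dots,k-1\}$ one has $\widetilde F_{<k}\ket{c} = \widetilde F\ket{c}$, so by the variational characterization of $\sigma_{\min}$, $\sigma_{\min}(\widetilde F_{<k}) \ge \sigma_{\min}(\widetilde F) \ge \sqrt{1-\epsilon}$ via \cref{lem:ms-svd}. In particular $\ket{\widetilde F_0},\dots,\ket{\widetilde F_{k-1}}$ are linearly independent and span the range of $\Pi_{k-1}$, so $\Pi_{k-1}\ket{v} = \widetilde F_{<k}\ket{c}$ for a unique $\ket{c}=\sum_{j=0}^{k-1}c_j\ket{j}$.

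Then I would combine two bounds on $\norm{\Pi_{k-1}\ket{v}}$. From the singular-value estimate, $\norm{\Pi_{k-1}\ket{v}} = \norm{\widetilde F_{<k}\ket{c}} \ge \sqrt{1-\epsilon}\,\norm{\ket{c}}$. On the other hand, using idempotence and self-adjointness of $\Pi_{k-1}$ together with Cauchy--Schwarz,
\[
\norm{\Pi_{k-1}\ket{v}}^2 = \braket{v}{\Pi_{k-1}v} = \sum_{j=0}^{k-1}\overline{c_j}\,\braket{\widetilde F_j}{v} \;\le\; \norm{\ket{c}}\cdot\Big(\textstyle\sum_{j=0}^{k-1}\abs{\braket{\widetilde F_j}{v}}^2\Big)^{1/2}.
\]
Substituting $\norm{\ket{c}} \le \norm{\Pi_{k-1}\ket{v}}/\sqrt{1-\epsilon}$, dividing through by $\norm{\Pi_{k-1}\ket{v}}$ (the claim being trivial when this is zero), and squaring yields $\bra{v}\Pi_{k-1}\ket{v}\le \tfrac{1}{1-\epsilon}\sum_{j=0}^{k-1}\abs{\braket{\widetilde F_j}{v}}^2$, as desired.

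I do not expect any serious obstacle: the only step needing a moment of care is the column-submatrix inequality $\sigma_{\min}(\widetilde F_{<k})\ge\sigma_{\min}(\widetilde F)$, which is immediate from zero-padding $\ket{c}$ and applying \cref{lem:ms-svd}; the remainder is Cauchy--Schwarz and bookkeeping. (An equivalent route uses the explicit formula $\Pi_{k-1} = \widetilde F_{<k}(\widetilde F_{<k}^\dagger\widetilde F_{<k})^{-1}\widetilde F_{<k}^\dagger$ and the operator inequality $(\widetilde F_{<k}^\dagger\widetilde F_{<k})^{-1}\preceq \tfrac{1}{1-\epsilon}I$, but the argument above avoids inverting the Gram matrix.)
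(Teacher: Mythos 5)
Your proof is correct and rests on the same key ingredient as the paper's: the column submatrix $M_k=\sum_{j<k}\ketbra{\widetilde F_j}{j}$ inherits the lower bound $\sqrt{1-\epsilon}$ on its singular values from \cref{lem:ms-svd}. The paper finishes by writing the SVD of $M_k$ and reading off $\Pi_{k-1}\preceq \frac{1}{1-\epsilon}M_kM_k^\dagger$ directly (essentially the "equivalent route" you mention in your parenthetical), while you reach the same inequality via coefficients and Cauchy--Schwarz; the two are just different bookkeeping for the same argument.
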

\begin{proof}
By \cref{lem:ms-svd}, the singular values of $\widetilde{F}$ are at least $\sqrt{1-\eps}$.
So, if we take a subset of columns of $\widetilde{F}$,
\begin{align*}
    M_k \coloneqq \sum_{j = 0}^{k-1}  \ketbra{\widetilde{F}_j}{j}
    = \begin{bmatrix}
        \vert & \vert & \cdots & \vert \\
        \ket{\widetilde{F}_0} & \ket{\widetilde{F}_1} & \cdots & \ket{\widetilde{F}_{k-1}} \\
        \vert & \vert & \cdots & \vert
    \end{bmatrix},
\end{align*}
the singular values of this matrix will also be at least $\sqrt{1-\eps}$.
Let $M_k = \sum_{j=0}^{k-1} \sigma_i \ketbra{u_i}{w_i}$ be a singular value decomposition of $M_k$.
Then $\Pi_{k-1} = \sum_{j=0}^{k-1} \ketbra{u_i}{u_i}$, so
\begin{equation*}
    \bra{v} \cdot \Pi_{k-1} \cdot \ket{v}
    = \sum_{j=0}^{k-1} \braket{v}{u_i}\braket{u_i}{v}
    \leq \sum_{j=0}^{k-1} \frac{\sigma_i^2}{1-\eps}\braket{v}{u_i}\braket{u_i}{v}
    = \frac{1}{1-\eps}\bra{v} \cdot M_k \cdot M_k^\dagger \cdot \ket{v}
    = \frac{1}{1-\eps} \cdot \sum_{j=0}^{k-1} \abs{\braket{\widetilde{F}_j}{v}}^2. \qedhere
\end{equation*}
\end{proof}
With \Cref{lem:cant-think-of-a-good-name} in hand,
we now prove the following upper bound on $\bra{\widetilde{F}_k} \cdot \Pi_{k-1} \cdot\ket{\widetilde{F}_k}$.
\begin{lemma}\label{lem:outer-product}
    $\bra{\widetilde{F}_k} \cdot \Pi_{k-1} \cdot\ket{\widetilde{F}_k} \leq 2\epsilon^2/(1-\epsilon)$.
\end{lemma}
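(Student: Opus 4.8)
The plan is to feed $\ket{v}=\ket{\widetilde F_k}$ into \cref{lem:cant-think-of-a-good-name}, reducing the claim to a bound on the off-diagonal Gram entries of the biased Fourier vectors, and then to evaluate those entries exactly and sum their squares using Parseval's identity on $\Z_q$. Concretely, \cref{lem:cant-think-of-a-good-name} gives
\begin{equation*}
    \bra{\widetilde F_k}\cdot\Pi_{k-1}\cdot\ket{\widetilde F_k}
    \;\leq\; \frac{1}{1-\epsilon}\sum_{j=0}^{k-1}\abs[\big]{\braket{\widetilde F_j}{\widetilde F_k}}^2,
\end{equation*}
so it suffices to prove $\sum_{j=0}^{k-1}\abs{\braket{\widetilde F_j}{\widetilde F_k}}^2 \leq 2\epsilon^2$.

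Next I would compute the inner products. Since every $g\in\cyc_q$ is a root of unity, $\overline{g^j}=g^{-j}$, so $\braket{\widetilde F_j}{\widetilde F_k} = \sum_{g\in\cyc_q}\mu_\epsilon(g)\,g^{k-j}$. Writing $\mu_\epsilon=(1-\epsilon)\mu_0+\epsilon\mu_1$ and using $\sum_{g\in\cyc_q} g^m = 0$ whenever $m\not\equiv 0\pmod q$, the $\mu_0$-part contributes nothing for any $j\in\{0,\dots,k-1\}$ — this is the one place the hypothesis $k\leq q-1$ enters, guaranteeing $1\leq k-j\leq q-1$. Hence $\braket{\widetilde F_j}{\widetilde F_k}=\epsilon\,\E_{g\sim\mu_1}[g^{k-j}]$, and reindexing by $m=k-j$,
\begin{equation*}
    \sum_{j=0}^{k-1}\abs[\big]{\braket{\widetilde F_j}{\widetilde F_k}}^2
    = \epsilon^2\sum_{m=1}^{k}\abs[\big]{\textstyle\E_{g\sim\mu_1}[g^m]}^2
    \leq \epsilon^2\Bigl(\sum_{m=0}^{q-1}\abs[\big]{\textstyle\E_{g\sim\mu_1}[g^m]}^2\Bigr) - \epsilon^2,
\end{equation*}
where we extended the sum to all of $\Z_q$ (dropping nonnegative tail terms) and pulled out the $m=0$ term, which equals $\abs{\E_{\mu_1}[1]}^2=1$.

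To finish, view $\mu_1$ as a function of $\ell$, where $g=\omega^\ell$, so that $\E_{g\sim\mu_1}[g^m]=\sum_{\ell}\mu_1(\omega^\ell)\omega^{\ell m}$; Parseval's identity on $\Z_q$ then gives $\sum_{m=0}^{q-1}\abs{\E_{g\sim\mu_1}[g^m]}^2 = q\sum_{g\in\cyc_q}\mu_1(g)^2$. Because $\mu_1$ is uniform on a support of $N\coloneqq 2\floor{q/4}+1$ points, $\sum_g\mu_1(g)^2=1/N$, so the sum is $q/N$. An elementary estimate — $\floor{q/4}\geq(q-3)/4$ for $q\geq 3$, and direct inspection for $q\in\{1,2\}$ — shows $q/N\leq 3$ for every positive integer $q$. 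Therefore $\sum_{m=1}^{k}\abs{\E_{\mu_1}[g^m]}^2\leq q/N-1\leq 2$, and plugging back in yields $\bra{\widetilde F_k}\Pi_{k-1}\ket{\widetilde F_k}\leq \frac{1}{1-\epsilon}\cdot 2\epsilon^2 = \frac{2\epsilon^2}{1-\epsilon}$.

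The main obstacle is conceptual rather than computational: one must extract the $\epsilon^2$ (not $\epsilon$) scaling. Bounding $\sum_{j\neq k}\abs{\braket{\widetilde F_j}{\widetilde F_k}}^2$ by $\sigma_{\max}(\widetilde F)^2-1\leq 2\epsilon$ via \cref{lem:ms-svd} is too lossy by a factor of $\epsilon$; the sharper bound requires the explicit formula $\braket{\widetilde F_j}{\widetilde F_k}=\epsilon\,\E_{\mu_1}[g^{k-j}]$, which in turn hinges on the slightly delicate observation that the uniform component $\mu_0$ drops out of every off-diagonal entry. Once that is in place, the remaining work is the routine Parseval computation and the elementary bound $q/N\leq 3$.
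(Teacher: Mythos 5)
Your proposal is correct and follows essentially the same route as the paper: apply \cref{lem:cant-think-of-a-good-name} with $\ket{v}=\ket{\widetilde F_k}$, observe that the $\mu_0$ component kills every off-diagonal Gram entry so that $\braket{\widetilde F_j}{\widetilde F_k}=\eps\,\E_{\mu_1}[g^{k-j}]$, and bound $\sum_{m\neq 0}\abs{\E_{\mu_1}[g^m]}^2$ by $q/(2\floor{q/4}+1)-1\leq 2$. Your Parseval computation is the same identity the paper derives by expanding $\E_{g,h\sim\mu_1}\sum_i(gh^{-1})^i$ into the collision probability $q\Pr[g=h]=q\sum_g\mu_1(g)^2$, so the two arguments coincide.
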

\begin{proof}
    By \Cref{lem:cant-think-of-a-good-name},
    \begin{equation*}
    \bra{\widetilde{F}_k} \cdot \Pi_{k-1} \cdot\ket{\widetilde{F}_k}
    \leq \frac{1}{1 - \epsilon}\cdot\sum_{j=0}^{k-1} |\braket{\widetilde{F}_j}{\widetilde{F}_k}|^2
    \leq \frac{1}{1 - \epsilon}\cdot\sum_{j \neq k} |\braket{\widetilde{F}_j}{\widetilde{F}_k}|^2.
    \end{equation*}
    Substituting in the definition of the $\ket{\widetilde{F}_j}$ vectors, this is equal to
    \begin{equation}\label{eq:step-two}
        \sum_{j \neq k} |\braket{\widetilde{F}_j}{\widetilde{F}_k}|^2 = \sum_{j\neq k} \Big| \E_{g \sim \mu_{\eps}} g^{k-j} \Big|^2
        = \sum_{i=1}^{q-1} \Big| \E_{g \sim \mu_{\eps}} g^{i} \Big|^2
        = \sum_{i=1}^{q-1} \Big| (1-\eps)\E_{g \sim \mu_0} \bracks{g^{i}} + \eps \E_{g \sim \mu_1} \bracks{g^{i}} \Big|^2
        = \eps^2 \cdot \sum_{i=1}^{q-1} \Big| \E_{g \sim \mu_1} g^{i} \Big|^2.
    \end{equation}
    The last equality uses that $\E_{g \sim \mu_0} g^i = 0$ unless $i = 0$ (mod $q$).
    \begin{multline}\label{eq:step-three}
    \eqref{eq:step-two}
    = \eps^2 \cdot \sum_{i=1}^{q-1} \Big(\E_{g \sim \mu_1} g^{i}\Big) \Big(\E_{h \sim \mu_1} h^{-i}\Big)
    = \eps^2 \cdot \E_{g, h \sim \mu_1} \sum_{i=1}^{q-1} (g h^{-1})^{i} \\
    = \eps^2 \cdot \parens[\Big]{\E_{g, h \sim \mu_1} \sum_{i=0}^{q-1} (g h^{-1})^{i} - 1}
    = \eps^2 \cdot \parens[\Big]{q \cdot \Pr_{g,h \sim \mu_1}[g = h] - 1}
    \end{multline}
    Here, the last equality uses that the expression $\sum_{i=0}^{q-1} (gh^{-1})^i$ is equal to $q$ if $gh^{-1} = 1$ and is 0 for all other $gh^{-1} \in \cyc_q$.
    Since $\mu_1$ is the uniform distribution over $2\floor{q/4} + 1$ elements of $\cyc_q$, the probability that two elements $g, h$ sampled from $\mu_1$ are equal is $1/(2\floor{q/4} + 1)$.
    Hence,
    \begin{align*}
        \eqref{eq:step-three}
        = \eps^2 \cdot \parens[\Big]{\frac{q}{2\floor{q/4} + 1} - 1}
        \leq 2 \epsilon^2.
    \end{align*}
    Plugging this in above completes the proof.
\end{proof}

Combined with \Cref{eq:length-of-G},
\Cref{lem:outer-product} implies that
$\braket{G_k}{G_k} \geq 1 - 2\epsilon^2/(1-\epsilon)$.
Now we would like to conclude by showing that $\ket{\overline{G}_k}$ has large overlap with $\ket{\widetilde{F}_k}$. To begin,
\begin{equation*}
    \braket{G_k}{\widetilde{F}_k}
    = \bra{G_k} \cdot \overline{\Pi}_{k-1} \cdot \ket{\widetilde{F}_k}
    = \braket{G_k}{G_k}
    = \norm{\ket{G_k}}^2.
\end{equation*}
Therefore,
\begin{equation*}
    \braket{\overline{G}_k}{\widetilde{F}_k}
    = \braket{G_k}{\widetilde{F}_k} / \norm{\ket{G_k}}
    = \norm{ \ket{G_k} }^2 / \norm{\ket{G_k}}
    = \norm{ \ket{G_k} } \geq \sqrt{1 - 2\epsilon^2/(1-\epsilon)}.
\end{equation*}
We now use this to prove \Cref{lem:biased-ft}.

\begin{proof}[Proof of \Cref{lem:biased-ft}]
We define the unitary matrix $F^{(\eps)}$ as follows:
\begin{equation*}
    F^{(\eps)}
    = \sum_{j \in \Z_q} \ketbra{j}{\overline{G}_j}.
\end{equation*}
Applying this to $\ket{\widetilde{F}_k}$, we get
\begin{equation*}
    F^{(\eps)}\cdot \ket{\widetilde{F}_k}
    = \Big(\sum_{j \in \Z_q} \ketbra{j}{\overline{G}_j}\Big)\cdot \ket{\widetilde{F}_k}
    = \sum_{j \in \Z_q} \alpha_j \ket{j},
\end{equation*}
where $\alpha_j = \braket{\overline{G}_j}{\widetilde{F}_k}$.
Because $F^{(\epsilon)}$ is unitary, $\sum_{j \in \Z_q} |\alpha_j|^2 = 1$.
Furthermore, $\alpha_j = 0$ for all $j > k$ because $\ket{\widetilde{F}_k}$ is contained in $\Pi_{j-1}$ and $\ket{\overline{G}_j}$ is orthogonal to $\Pi_{j-1}$.
Finally,
    \begin{equation*}
        \alpha_k = \braket{\overline{G}_k}{\widetilde{F}_k}
    \geq \sqrt{1 - 2\epsilon^2/(1-\epsilon)}.
    \end{equation*}
This completes the proof.
\end{proof}

\section{Extending the lower bound to amplitude amplification}

Now, we prove \cref{thm:lower-aa-main}.
Our proof builds off of the amplitude estimation lower bound: we adapt the hard instance to the amplification setting, and then lift the estimation lower bound to a lower bound for the adapted instance.

\begin{problem}[Distinguishing biased unitary ensembles] \label{prob:aa-distinguish}
    For a dimension parameter $d$, a bias parameter $\eps \in (0,1)$, and an order parameter $q$, consider the following problem.
    Let $b \in \braces{1,2}$ be a bit which is 1 or 2 with probability $1/2$.
    Let $V$ be drawn from ensemble 1 of \cref{def:ae-ensembles}.
    Then draw a unitary $U$ from ensemble $b$:
    \begin{enumerate}
        \item If $b = 1$, then $U = V$;
        \item If $b = 2$, then $U = DV$ where $D$ is a diagonal unitary with entries $\bra{k}D\ket{k} = e^{2\pi \ii k / d}$.
    \end{enumerate}
    The goal is to, given quantum access to the unitary $U$, output a $\wh{b} \in \braces{1,2}$ such that $\wh{b} = b$ with good probability.
\end{problem}

This distinguishing task can be solved with amplitude amplification.

\begin{proposition} \label{prop:aa-reduction}
    Consider the distinguishing task \cref{prob:aa-distinguish} with parameters $\eps \in (0,1)$, $d \geq C/\eps^2$, and $q \geq C$ with $C$ a sufficiently large constant.
    Then there is an algorithm which succeeds at this task with probability $\geq 0.9$, using one call to amplitude amplification (\cref{def:aa-ae}).
\end{proposition}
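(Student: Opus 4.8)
The plan is to follow the structure of \cref{prop:ae-reduction}, but with one new twist: amplitude amplification returns a quantum \emph{state}, not a number, so the hard instance must be arranged so that the good state $\ket{\phi_\good}$ itself --- rather than the amplitude $a$ --- reveals the bit $b$. The mechanism is that in \cref{prob:aa-distinguish} the diagonal entries of $U$ are biased toward $1$ when $b=1$, but the $k$-th entry is biased toward $\omega_d^{k}$ when $b=2$, where $\omega_d \coloneqq e^{2\pi\ii/d}$; and a quantum Fourier transform cleanly separates these two ``bias directions.'' Let $T$ be the $d$-dimensional QFT and set $c_j \coloneqq \bra{j}\,T^\dagger U T\,\ket{0} = \frac1d\sum_{k} U_k\,\omega_d^{-jk}$, so that $c_0 = \ntr(U)$ and $c_1 = \frac1d\sum_k U_k\,\omega_d^{-k}$. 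Writing $U_k = \omega_d^{(b-1)k} V_k$ with the $V_k$ drawn i.i.d.\ from $\mu_\eps$ (i.e.\ $V$ from ensemble $1$ of \cref{def:ae-ensembles}), the component $c_{b-1}$ equals $\frac1d\sum_k V_k$ and has mean $\E_{g\sim\mu_\eps}[g]\in(\eps/2,\eps)$, while the remaining one of $c_0,c_1$ equals $\frac1d\sum_k V_k\,\omega_d^{\pm k}$ for an exponent $\pm 1\not\equiv 0 \pmod{d}$ and so has mean $\E[V_k]\cdot\frac1d\sum_k\omega_d^{\pm k}=0$. Thus the ``signal'' sits on $\ket{0}$ when $b=1$ and on $\ket{1}$ when $b=2$.

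Next I would package this as an instance of \cref{def:aa-ae}. Let $Z$ flip a flag qubit $\reg{S}$ exactly when register $\reg{R}$ is \emph{not} in $\{\ket{0},\ket{1}\}$, and set $X \coloneqq Z\,(T^\dagger U T)_{\reg{R}}$, which is implementable with one forward query to $U$ together with fixed known gates. Then
\begin{align*}
    X\ket{0}_{\reg{R}}\ket{0}_{\reg{S}} = (c_0\ket{0} + c_1\ket{1})_{\reg{R}}\ket{0}_{\reg{S}} + \sqrt{1 - |c_0|^2 - |c_1|^2}\;\ket{\phi_\bad}_{\reg{R}}\ket{1}_{\reg{S}},
\end{align*}
which is exactly the form required by \cref{def:aa-ae}, with $a = \sqrt{|c_0|^2+|c_1|^2}$ and $\ket{\phi_\good} = (c_0\ket{0} + c_1\ket{1})/a \in \mathrm{span}\{\ket{0},\ket{1}\}$. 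Given any algorithm for amplitude amplification that makes $n$ forward queries to $X$, I obtain a distinguisher making $n$ forward queries to $U$: call amplitude amplification once on $X$ to get a state $\rho$ on $\reg{R}$ with $\frac12\trnorm{\rho - \proj{\phi_\good}}\leq 0.01$, measure $\reg{R}$ in the computational basis, and output $\wh{b} = 1$ if the outcome is $\ket{0}$ and $\wh{b} = 2$ otherwise.

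For correctness I would use concentration. Each of $c_0,c_1$ is a normalized trace of a diagonal unitary with independent entries --- namely $U$, respectively $U$ with its $k$-th diagonal entry multiplied by the fixed phase $\omega_d^{-k}$ --- so by \cref{lem:traces-concentrate}, for $d \geq C/\eps^2$ with $C$ large enough, each is within $0.01\eps$ of its mean with probability $\geq 0.99$. Combined with \cref{lem:circle-means} (using $q \geq C$) and a union bound, this gives: with probability $\geq 0.98$, when $b=1$ we have $|c_0| \geq 0.4\eps$ and $|c_1| \leq 0.01\eps$, so $|\braket{0}{\phi_\good}|^2 = |c_0|^2/(|c_0|^2 + |c_1|^2) \geq 0.99$, and symmetrically $|\braket{1}{\phi_\good}|^2 \geq 0.99$ when $b=2$. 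Since $\rho$ is $0.01$-close to $\proj{\phi_\good}$ in trace distance, the measurement returns $\ket{b-1}$ with probability at least $0.99 - 0.02 = 0.97$, in which case the output rule sets $\wh{b} = b$; so the distinguisher succeeds with probability at least $0.98 \cdot 0.97 \geq 0.9$.

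The one delicate step is the bookkeeping of the first paragraph: verifying that conjugating by the QFT exactly swaps which of the two slots $\ket{0},\ket{1}$ carries the bias --- i.e.\ that the phase gradient of $D$ cancels the Fourier phase $\omega_d^{-k}$ in precisely the $\ket{1}$-slot --- and that the gap between the signal ($\approx\eps/2$ in magnitude) and the noise ($\approx 0$) is wide enough to survive the $0.01$ trace-distance slack inherent to amplitude amplification. The concentration estimates and the final Holevo-style argument are routine given \cref{lem:traces-concentrate,lem:circle-means}.
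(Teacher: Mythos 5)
Your proposal is correct and is essentially the paper's own proof: the paper uses exactly the same circuit $Z\,(T^\dagger U T)$ with $T\ket{0}$ the uniform superposition and $T\ket{1}=DT\ket{0}$ (your QFT is one valid instantiation of such a $T$), identifies $\ket{\phi_\good}\propto \ntr(U)\ket{0}+\ntr(D^\dagger U)\ket{1}$, and concludes via \cref{lem:traces-concentrate}, \cref{lem:circle-means}, and a computational-basis measurement of the amplified state. The only differences are cosmetic (which flag value tags the good state, and slightly different but equally valid numerical slack).
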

\begin{proof}
Let $T \in \C^{d \times d}$ be a unitary matrix whose first two columns are as follows:
\begin{align*}
    T \ket{0} &= \frac{1}{\sqrt{d}} \sum_{k=0}^{d-1} \ket{k}; \\
    T \ket{1} &= DT\ket{0} = \frac{1}{\sqrt{d}} \sum_{k=0}^{d-1} e^{2\pi \ii k / d} \ket{k}.
\end{align*}
Further, let $Z \in \C^{2d \times 2d}$ be a unitary matrix which ``marks'' $\ket{0}$ and $\ket{1}$: $Z \ket{0} \ket{j} = \ket{0}\ket{j \oplus 1}$, $Z \ket{1} \ket{j} = \ket{1}\ket{j \oplus 1}$, and $Z \ket{i}\ket{j} = \ket{i}\ket{j}$ for $i \neq 0,1$.
Then the quantum circuit
\begin{align*}
    \Qcircuit @R=1em @C=1em {
    \lstick{\ket{0}_{\reg{R}}} & \qw & \gate{T} & \gate{U} & \gate{T^\dagger} & \multigate{1}{Z} & \qw \\
    \lstick{\ket{0}_{\reg{S}}} & \qw & \qw & \qw & \qw & \ghost{Z} & \qw
    }
\end{align*}
outputs the state
\begin{align*}
    &\sqrt{\abs{\alpha}^2 + \abs{\beta}^2}\ket{\phi_{\good}}_{\reg{R}} \ket{1}_{\reg{S}} + \sqrt{1 - \abs{\alpha}^2 - \abs{\beta}^2} \ket{\phi_{\bad}}_{\reg{R}} \ket{0}_{\reg{S}} \\
    & \qquad\qquad \text{where } \alpha \coloneqq \ntr(U),\, \beta \coloneqq \ntr(D^\dagger U) \text{, and } \ket{\phi_{\good}} = \frac{\alpha\ket{0} + \beta \ket{1}}{\sqrt{\abs{\alpha}^2 + \abs{\beta}^2}}.
\end{align*}
If we call amplitude amplification (\cref{def:aa-ae}) on this circuit, $Z_{\reg{RS}} T^\dagger_{\reg{R}} U_{\reg{R}} T_{\reg{R}}$, we get a state $\rho$ such that $\frac12 \trnorm{\rho - \proj{\phi_\good}} < 0.01$.
Consider measuring $\rho$ in the computational basis.
By \cref{lem:traces-concentrate}, for $d \geq \frac{C}{\eps^2}$ for sufficiently large $C$,
\begin{align}
    \Pr\bracks[\Big]{\abs[\Big]{\alpha - \E[\alpha]} \geq 0.01 \eps} &\leq 0.01;
    \label{eq:alpha-concentrates} \\
    \Pr\bracks[\Big]{\abs[\Big]{\beta - \E[\beta]} \geq 0.01 \eps} &\leq 0.01.
    \label{eq:beta-concentrates}
\end{align}
By \cref{lem:circle-means}, when $b = 1$, $\frac{\eps}{2} < \E[\alpha] < \eps$ and $\E[\beta] = 0$; when $b = 2$, $\E[\alpha] = 0$ and $\frac{\eps}{2} < \E[\beta] < \eps$.
So, with probability $\geq 0.98$, when $b = 1$,
\begin{align*}
    \frac{\abs{\alpha}^2}{\abs{\alpha}^2 + \abs{\beta}^2} = \frac{1}{1 + \abs{\beta/\alpha}^2} > \frac{1}{1 + (0.01/0.49)^2} > 0.99.
\end{align*}
So, if we measure $\rho$ in the computational basis, we see the outcome $\ket{0}$ with probability $\geq 0.95$.
On the other hand, reproducing the argument for $b = 2$, for that ensemble, if we measure the output of amplitude amplification, $\rho$, we see the outcome $\ket{1}$ with probability $\geq 0.95$.

This leads to the following distinguishing algorithm.
Run amplitude amplification on the specified circuit, and measure the resulting state in the computational basis: if the outcome is $\ket{0}$, output $\wh{b} = 1$, and if the outcome is $\ket{1}$, output $\wh{b} = 2$.
This algorithm succeeds with probability $\geq 0.9$.
\end{proof}

Our lower bound in \cref{thm:lower-ae} can be lifted to prove a lower bound for this problem.

\begin{theorem} \label{thm:lower-aa}
    Consider \cref{prob:aa-distinguish} with $\eps \in (0, 1/2)$ and $q \geq \frac{1}{\eps^2}$.
    Then any algorithm to solve it with probability $\geq \frac56$ requires $n \geq \frac{1}{12\eps^2}$ applications of $U$.
\end{theorem}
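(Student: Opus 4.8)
The plan is to reduce \cref{prob:ae-distinguish}---which \cref{thm:lower-ae} already shows requires $n \geq \frac{1}{12\eps^2}$ forward queries---to \cref{prob:aa-distinguish}, so that the same bound transfers. I would start from a hypothetical algorithm $\mathcal{A}$ solving \cref{prob:aa-distinguish} with probability $\geq \frac56$ using $n$ queries to its oracle $U$, and construct from it an algorithm $\mathcal{B}$ solving \cref{prob:ae-distinguish} (for the same $\eps$ and $q$) with probability $\geq \frac23$ using only $n$ forward queries to its oracle $W$; \cref{thm:lower-ae} then forces $n \geq \frac{1}{12\eps^2}$. I may assume $n < q$, since otherwise $n \geq q \geq 1/\eps^2 \geq \frac{1}{12\eps^2}$ already. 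Writing $D = \sum_k e^{2\pi\ii k/d}\proj{k}$ for the fixed diagonal unitary of \cref{prob:aa-distinguish}, the algorithm $\mathcal{B}$ works as follows: on input $W$, it flips a fair coin $c \in \braces{1,2}$ and then runs $\mathcal{A}$, answering each of $\mathcal{A}$'s $U$-queries by a single query to $W$ if $c = 1$, and by a query to $W$ followed by the fixed gate $D$ (simulating the oracle $DW$) if $c = 2$; if $\mathcal{A}$ returns $\wh b \in \braces{1,2}$, then $\mathcal{B}$ outputs $1$ (meaning ``$W$ came from Ensemble~1'') when $\wh b = c$ and $0$ otherwise. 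This uses exactly $n$ forward queries to $W$.

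For the analysis I would treat the two ensembles of \cref{prob:ae-distinguish} separately. If $W$ comes from Ensemble~1, its diagonal entries are i.i.d.\ $\mu_\eps$, so the oracle $W$ is distributed exactly like $U$ in case $b=1$ of \cref{prob:aa-distinguish}, and $DW$ is distributed exactly like $U = DV$ in case $b=2$. Hence (averaging jointly over $c$ and $\mathcal{A}$'s internal randomness) the pair $(c,\,\text{oracle handed to }\mathcal{A})$ has precisely the distribution of $(b, U)$ in \cref{prob:aa-distinguish}, so $\Pr[\wh b = c \mid \text{Ensemble }1] = \Pr[\mathcal{A}\text{ succeeds}] \geq \frac56$.

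The interesting case is $W$ from Ensemble~0, with entries i.i.d.\ uniform over $\cyc_q$. Here I would show that the mixed state $\mathcal{A}$ measures at the end is \emph{identical} whether $c=1$ (oracle $W$) or $c=2$ (oracle $DW$); then $\wh b$ is independent of $c$, giving $\Pr[\wh b = c \mid \text{Ensemble }0] = \frac12$. To prove this, I would rerun the Feynman-path and purification computation from the proof of \cref{thm:lower-ae} at bias $0$, but with $DW = \sum_x e^{2\pi\ii x/d}W_x\proj{x}$ as the oracle. The only change is that the coefficient of $\ket{\feyn_{x_n,\dots,x_1}}$ acquires an extra factor $\prod_i e^{2\pi\ii\, i h_i/d}$ (where $h_i$ is the multiplicity of $i$ among $x_1,\dots,x_n$) beyond the usual $\prod_i W_i^{h_i}$. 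Purifying over $W$ and applying $(F^{(0)})^{\otimes d}$ to the purification register---which is an \emph{exact} Fourier transform because $\mu_0$ is uniform, and which is applicable since $n < q$ makes each $h_i \leq q-1$---turns the state into
\begin{equation*}
    \sum_{\substack{h_1,\dots,h_d \geq 0\\ h_1+\dots+h_d = n}} \left(\prod_i e^{2\pi\ii\, i h_i/d}\right)\ket{\feyn_{h_1,\dots,h_d}}_{\reg{RS}} \otimes \bigotimes_i \ket{h_i}_{\reg{P}}.
\end{equation*}
Since the vectors $\bigotimes_i \ket{h_i}$ are orthonormal across distinct histograms, tracing out $\reg{P}$ kills the per-histogram phases and leaves $\sum_h \proj{\feyn_{h_1,\dots,h_d}}$---exactly the reduced state one obtains with oracle $W$. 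Combining the two cases, $\mathcal{B}$ succeeds with probability $\geq \frac12\cdot\frac56 + \frac12\cdot\frac12 = \frac23$, so \cref{thm:lower-ae} gives $n \geq \frac{1}{12\eps^2}$.

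I expect the Ensemble-0 step to be the one requiring care. It is tempting but \emph{false} to argue that $DW$ and $W$ have the same distribution when $W$ is uniform over $\cyc_q$---the entries of $DW$ need not even be $q$-th roots of unity---so one cannot invoke distributional equivalence of the oracles. The correct, weaker statement is that the uniform $\cyc_q$ randomness decoheres the Feynman branches belonging to different histograms, and on each surviving (same-histogram) branch the phases injected by $D$ reduce to a global phase and are therefore invisible after the purification register is traced out. A secondary subtlety is the Ensemble-1 step: one should average over $\mathcal{B}$'s coin $c$ together with $\mathcal{A}$'s randomness rather than claim a success guarantee for each fixed value of $c$.
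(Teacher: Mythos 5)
Your reduction is exactly the paper's: flip a fair coin $b'\in\{1,2\}$, feed the amplitude-amplification distinguisher either $W$ or $DW$, and output ``Ensemble 1'' iff its answer matches the coin; the Ensemble-1 case gives success $\geq\frac56$ because the induced distribution on (coin, oracle) matches \cref{prob:aa-distinguish}, and the Ensemble-0 case must give success exactly $\frac12$, yielding $\frac12(\frac56+\frac12)=\frac23$ and a contradiction with \cref{thm:lower-ae}. The one place you diverge is the justification of the Ensemble-0 step. The paper disposes of it in one line by asserting that $U$ and $DU$ are identically distributed when $U$ has i.i.d.\ uniform entries over $\cyc_q$; you correctly flag that this is literally false unless $d\mid q$ (the entries $e^{2\pi\ii k/d}U_k$ live in the coset $e^{2\pi\ii k/d}\cyc_q$, not in $\cyc_q$), and you substitute the weaker but sufficient claim that the algorithm's \emph{output state} is unchanged, proved by rerunning the Feynman-path/purification computation at bias $0$: the extra factor $\prod_i e^{2\pi\ii\, i h_i/d}$ depends only on the histogram, the exact Fourier transform (valid since you reduce to $n<q$) makes distinct histograms orthogonal on the purification register, and tracing out $\reg{P}$ kills these per-branch phases. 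This is a sound and, in fact, more robust argument than the paper's; the paper's version can be repaired more cheaply by choosing $q$ to be a multiple of $d$ (so that $D$'s entries lie in $\cyc_q$ and multiplication by them permutes $\cyc_q$), which is consistent with its standing assumption that $q$ need only be sufficiently large. Your secondary point about averaging over the coin jointly with the algorithm's randomness, rather than conditioning on each value of $c$, is also handled correctly.
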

\begin{proof}[Proof of \cref{thm:lower-aa}]
Suppose we had a quantum circuit which applies $U$ $n < \frac{1}{12\eps^2}$ times and solves \cref{prob:aa-distinguish} with probability $\geq \frac56$.
Then we can use this algorithm to solve \cref{prob:ae-distinguish}.

Let $U$ be a unitary drawn from either ensemble 0 or ensemble 1 of \cref{prob:ae-distinguish} (with $a = 0$).
Then sample a coin $b' \in \braces{1,2}$ with equal probability.
If $b' = 1$, then run the distinguishing algorithm on $U$; if $b' = 2$, then run the distinguishing algorithm on $D U$.
Because of the guarantee, when $U$ is drawn from ensemble 1, the output $\wh{b} = b'$ with probability at least $\frac{5}{6}$.
When $U$ is drawn from ensemble 0, the output $\wh{b} = b'$ with probability exactly $\frac12$, since in this case, $U$ and $DU$ follow identical distributions.

So, we can distinguish ensemble 0 from 1 by outputting the guess $1$ if $\wh{b} = b'$, and $0$ otherwise.
This algorithm succeeds with probability $\geq \frac12(\frac12 + \frac56) = \frac23$, contradicting \cref{thm:lower-ae}.
\end{proof}

Finally, with this, we can prove our main theorem as a corollary.

\begin{proof}[Proof of \cref{thm:lower-aa-main}]
We begin by discussing a subtlety in modeling algorithms for amplitude amplification which did not arise in the proof of our amplitude estimation lower bound. Given a unitary matrix $X$ such that
    \begin{align}\label{eq:this-is-a-repeat}
        X\ket{0} = a\ket{\phi_\good}\ket{0} + \sqrt{1 - a^2}\ket{\phi_\bad} \ket{1},
    \end{align}
algorithms for amplitude amplification, such as the original algorithm due to Brassard, H{\o}yer, Mosca, and Tapp~\cite{BHMT02}, are allowed to apply the unitary $X$ a number of times which depends on the amplitude parameter $a$ and increases to infinity as $a$ tends towards 0. This means that these algorithms cannot be modeled as a circuit of a fixed size, but rather as an outer classical loop which grows the quantum circuit until the loop terminates.
This is an issue because our purification-based lower bound techniques assume that the algorithm is formatted as a fixed quantum circuit.

To handle this issue, we structure our proof as a proof by contradiction.
Suppose for the sake of contradiction that there is an algorithm $\calA$ which solves the amplitude amplification problem with $\littleO{\min(d, 1/a^2)}$ applications of $X$.
Then for any lower bound $a^* > 0$,
if $\calA$ is provided a unitary $X$ satisfying \eqref{eq:this-is-a-repeat} with $a \geq a^*$,
it will perform amplitude amplification using at most a finite number $n = \littleO{\min(d, 1/(a^*)^2)}$ of applications of $X$.
This algorithm can then be compiled into a fixed quantum circuit $\calC$
which performs $n$ applications of $X$ and behaves identically to $\calA$ on all unitaries $X$ of the form \eqref{eq:this-is-a-repeat} with $a \geq a^*$.
In this way, we can assume that our algorithm is modeled as a fixed quantum circuit,
and from here, the proof proceeds mostly along the lines of the proof of \cref{thm:lower-ae-main}.
Ultimately, we will show that $n \geq \bigO{1} \cdot \min(d, 1/(a^*)^2)$ applications of $X$ are necessary to perform amplitude amplification when promised that $a \geq a^*$, which leads to a contradiction.

Let $C$ be the sufficiently large constant from \cref{prop:aa-reduction}.
We will assume that $a^* \in (0,0.125)$ and that $d > 4C$, as the lower bound is vacuously true when these bounds do not hold.
Let
\begin{equation*}
    \epsilon = \max\{4a^*, \sqrt{C/d}\},
\end{equation*}
and note that $\epsilon \in (0, 1/2)$ by our assumptions on $a$ and $d$.
Consider the distinguishing task from \cref{prob:aa-distinguish} with bias parameter $\epsilon$
and order parameter
$q = \lceil\max\{1/\eps^2, C\}\rceil$.
As $d \geq C/\epsilon^2$ and $q \geq C$, we can apply \cref{prop:aa-reduction},
which states that there is an algorithm which succeeds at this distinguishing task with probability $\geq 0.9$ using one call to amplitude amplification.
Inspecting the proof of \cref{prop:aa-reduction},
we see that in the ``probability 0.98 case'', the amplitude amplification subroutine is provided a unitary matrix $X$ as in \eqref{eq:this-is-a-repeat} in which the amplitude on the ``good'' state is
\begin{equation*}
    a = \sqrt{|\alpha|^2 + |\beta|^2}
    \geq \max\{|\alpha|, |\beta|\}
    \geq 0.49 \epsilon
    \geq a^*.
\end{equation*}
Hence, to perform amplitude amplification, we can use the circuit $C$ resulting from compiling $\calA$ using the amplitude lower bound of $a^*$.
But since $\epsilon \in (0, 1/2)$ and $q \geq 1/\epsilon^2$, we can apply \cref{thm:lower-aa}, which states that this algorithm requires
\begin{equation*}
    n \geq \frac{1}{12\epsilon^2}
    = \min\Big(\frac{1}{192 (a^*)^2}, \frac{d}{12C}\Big)
    = \bigO{1} \cdot \min(1/(a^*)^2, d)
\end{equation*}
applications of $U$.
As the only applications of $U$ in the algorithm occur in the amplitude amplification subroutine,
amplitude amplification with error $\epsilon$ requires this many applications of $U$.
This gives the desired contradiction, finishing the proof.
\end{proof}

\section*{Acknowledgments}

E.T.\ is supported by the Miller Institute for Basic Research in Science, University of California Berkeley. 
E.T.\ thanks Robin Kothari, Anirban Chowdhury, and Arjan Cornelissen for discussions.
J.W.\ is supported by the NSF CAREER award CCF-2339711, as well as the Ewin Tang Open Problems Prize Fund.
We thank Gregory Rosenthal for carefully reading an earlier version of this document.

\printbibliography

\end{document}